\documentclass[journal,letterpaper,twocolumn,twoside,nofonttune]{IEEEtran}
\usepackage{times}

\usepackage{amsmath}
\usepackage{amsfonts}
\usepackage{bbding}   
\usepackage{stmaryrd}
\usepackage{latexsym}
\usepackage{amssymb}
\usepackage{algorithm2e}
\usepackage{graphics}
\usepackage{graphicx}
\usepackage{subfigure}
\usepackage{colordvi}

\usepackage{mathrsfs} 

\usepackage{cite}  
\usepackage{upref}

\usepackage{theorem}





\parindent   0.15in

\pretolerance=50
\tolerance=100
\hyphenpenalty=1

\clubpenalty=300
\widowpenalty=300
\displaywidowpenalty=100

\hbadness = 10000
\vbadness 10000

\hfuzz = 3.0pt

\flushbottom


\title{\Huge$\,$\\[-2.75ex]
Algebraic List-decoding of Subspace Codes\\[0.50ex]}

\author{%
\authorblockN{\large{\bf%
Hessam Mahdavifar and Alexander Vardy}\vspace{0.25ex}}\\
\authorblockA{\large
University of California San Diego, La Jolla, CA\,92093, USA\\ 
{\tt \{hessam@ucsd.edu,\,avardy@ucsd.edu\}}}\vspace*{-1.5ex}
}




\theoremstyle{plain} 
\theorembodyfont{\normalfont\slshape}

\newtheorem{thm}{Theorem\hspace{-1pt}} 
\newenvironment{theorem}
{\begin{thm}\hspace*{-1ex}{\bf.}}{\end{thm}}

\newtheorem{lem}[thm]{Lemma\hspace{-.75pt}}
\newenvironment{lemma}{\begin{lem}\hspace*{-1ex}{\bf.}}{\end{lem}}

\newtheorem{prop}[thm]{Proposition$\!$}

\newtheorem{cor}[thm]{Corollary$\!$}
\newenvironment{corollary}{\begin{cor}\hspace*{-1ex}{\bf.}}{\end{cor}}

\newtheorem{defn}{Definition$\!$}

\setlength\theorempreskipamount{5pt plus 5pt minus 3pt}
\setlength\theorempostskipamount{5pt plus 3pt minus 1.5pt}



\newcounter{enumrom}
\renewcommand{\theenumrom}{(\roman{enumrom})}


\makeatletter
\renewcommand{\@endtheorem}{\endtrivlist}
\makeatother




\makeatletter
\renewcommand{\thefigure}{{\bf \@arabic\c@figure}}
\renewcommand{\fnum@figure}{{\bf Figure}\,\thefigure}
\makeatother


 %


\newcommand{\cC}{{\cal C}} 

\newcommand{\cE}{{\cal E}} 

\newcommand{\cG}{{\cal G}}

\newcommand{\cP}{{\cal P}}


\DeclareMathAlphabet{\mathbfsl}{OT1}{ppl}{b}{it} 

\newcommand{\bu}{\mathbfsl{u}}




\newcommand{\be}[1]{\begin{equation}\label{#1}}
\newcommand{\ee}{\end{equation}} 
\newcommand{\eq}[1]{(\ref{#1})}


\renewcommand{\leq}{\leqslant}
 
\renewcommand{\geq}{\geqslant}


\newcommand{\script}[1]{{\mathscr #1}}

\renewcommand{\Bbb}{\mathbb}
 
\newcommand{\N}{{\Bbb N}}



\newcommand{\Tref}[1]{Theo\-rem\,\ref{#1}}

\newcommand{\Lref}[1]{Lem\-ma\,\ref{#1}}
\newcommand{\Cref}[1]{Co\-ro\-lla\-ry\,\ref{#1}}


\newcommand{\F}{\Bbb{F}}
\newcommand{\Fq}{\Bbb{F}_{\!q}}

\newcommand{\Fqm}{\Bbb{F}_{q^m}}

\newcommand{\fbu}{f_{\bu}}
\newcommand{\fbutwo}{\fbu^{\otimes 2}}

\newcommand{\ftwo}{f^{\otimes 2}}
\newcommand{\fl}{\fbu^{\otimes L}}
\renewcommand{\L}{\script{L}}


\newcommand{\deff}{\mbox{$\stackrel{\rm def}{=}$}}


\newcommand{\al}{\alpha}

\newcommand{\Rs}{R^{*}}

\begin{document}

\maketitle
\thispagestyle{empty}

\begin{abstract}
Subspace codes were introduced in order to correct errors 
and erasures for randomized network coding, in the case 
where network topology is unknown (the noncoherent case). 
Subspace codes are indeed collections of subspaces of a 
certain vector space over a finite field. The Koetter-Kschischang 
construction of subspace codes are similar to Reed-Solomon codes 
in that codewords are obtained by evaluating certain (linearized) polynomials. 

\looseness=-1
\hspace*{-1pt}In this paper, we consider the problem of list-decoding 
the~Koet\-ter-Kschischang subspace codes. In a sense,
we are able to achieve for these codes what Sudan was able
to achieve~for Reed-Solomon codes. In order to do so, we have to 
modify and generalize the original Koetter-Kschischang
construction in many important respects. The end result is
this: for any integer $L$, our list-$L$ decoder guarantees
successful recovery of the message subspace provided that the 
normalized dimension of the error is at most
$$
L \,-\, \frac{L(L+1)}{2}\Rs
$$
where $\Rs$ is the \emph{normalized packet rate}. Just as in the
case~of Sudan's list-decoding algorithm, this exceeds the 
previously~best-known error-correction radius $1-\Rs$,
demonstrated by Koetter and Kschischang, for low rates $\Rs$.
\end{abstract}

\begin{keywords} 
list-decoding, subspace codes, operator channel, linearized polynomial, KK construction
\vspace{-1.00ex}
\end{keywords}

\section{Introduction} 
\label{sec:Introduction}
\noindent\looseness=-1
\PARstart{T}{he} technique of list-decoding has been used to decode beyond the error-correction diameter bound \cite{E}, \cite{E2} and \cite{GS}. In general, the decoding problem is the problem of finding a codeword which is within a particular distance from a received word. In classical decoding algorithms, the decoding radius is such that decoding spheres around codewords do not intersect which results in diameter bound on the decoding radius. In list-decoding, the receiver reconstructs a list of all possible codewords within a particular distance of the received word. This offers a potential for recovery from errors beyond the traditional error-correction bound.  

In this paper we consider the problem of list-decoding of subspace codes. Although the idea is similar, the nature of problem is different from the classical case in many respects. In \cite{KK}, subspace codes were introduced in the context of \emph{noncoherent} transmission model for random network coding. In noncoherent transmission model, neither transmitter nor receiver is assumed to have any knowledge about the underlying network topology and the particular network coding operations performed at the network nodes. 

Random network coding is a very powerful tool for information transmission in networks\cite{CWJ}, \cite{HKMKE} and \cite{HMKKESL}. In random network coding communication between transmitter and receiver is done in a series of \emph{generations}. During each generation the transmitter transmits a group of packets with a fixed-length through the network. These packets can be regarded as vectors of length $n$ over a finite field $\Fq$. They pass through intermediate nodes of the network. Each intermediate node creates a random $\Fq$-linear combination of the packets that are already available at its input links and sends it through an output link. This is done separately for all of its output links. In this model, we suppose that a number of erroneous packets may be injected into the network. Finally, the receiver collects a number of such generated packets and tries to reconstruct the set of packets injected into the network. The authors of \cite{KK} are led to consider information transmission by the choice of vector space spanned by the set of packets at the transmitter. Intuitively this is the only thing that is preserved through transmission as linear combinations are assumed to be random and unknown to the transmitter and the receiver.

In \cite{KK} the operator channel is defined in order to capture the essence of random network coding model. The input and output of an operator channel are subspaces of a certain vector space called the ambient space. Deletion of vectors from the transmitted subspace is called erasure and addition of vectors to the transmitted subspace is called error. The input is affected by some erasures and errors through the channel and the goal is to recover the input from the output. A Reed-Solomon-like subspace code capable of correcting errors and erasures on the operator channel is introduced in \cite{KK}. We call it KK code (Koetter-Kschischang code) throughout this paper. In KK construction, codewords are obtained by evaluating certain (linearized) polynomials. In Section \ref{sec:two} we briefly review the KK construction, its encoding and decoding. 

What we are doing in this paper, in some sense, is analogous to what Sudan did for list-decoding of Reed-Solomon codes \cite{S}. The main obstacle here is that the ring of linearized polynomials is not commutative. Indeed equations over this ring may have more roots than their degrees. Therefore, the straightforward generalization of list-$1$ decoding may result in an exponential list size. We modify the KK construction in many important respects in order to enable list-decoding. The idea is to consider a commutative subring of the ring of linearized polynomials. However, this causes a rate reduction. We solve this problem by using the normal basis for an extension field of a finite field. In Section \ref{sec:three} this technique is explained for the simplest case, one dimensional codes with list size 2, wherein the KK construction can not correct any error for any rate but we are able to correct one error for $0<\Rs<\frac{1}{3}$, where $\Rs$ is the \emph{packet rate} of the code. The packet rate of a subspace code is simply the number of information packets normalized by the number of encoded packets. This is defined more precisely in Section \ref{sec:two}. In this paper, we use this notion of rate in order to express our results in a more convenient way. The results of Section \ref{sec:three} is generalized to arbitrary list size yet for one dimensional codes in Section \ref{sec:four}. The transmitted message is recovered as long as the dimension of error does not exceed $L-\frac{L(L+1)}{2}\Rs$, where $L$ is the list size. 

Our construction can not be immediately generalized to the case of dimension greater than one. The problem is that we already use the whole space as the root space of the equation from which we extract the message polynomial. In fact increasing the dimension of the code does not give more information at the receiver which results in rate reduction. Therefore, we extend the space root of the interpolation polynomial while the symbols are still from the smaller field. In Section \ref{sec:five} we use this idea in order to construct codes of any dimension. Then we get the normalized decoding radius $L-\frac{1}{2}L(L+1)\Rs$, where $L$ is the list size and $n$ is the dimension of the code. This is similar to the result in the one dimension case. 

We close the paper by discussing some directions for future work and drawing conclusion. 

\vspace{1ex}
\section{Prior Work}
\vspace{.25ex}
\label{sec:two}
In this section, following \cite{KK} we review some required background and some prior work on subspace codes. We explain the operator channel model, the ring of linearized polynomials and the Koetter-Kschischang construction of subspace codes. 

The authors of \cite{KK} introduced the operator channel model in order to capture the essence of random network coding. This is formulated for the case of single unicast i.e. communication between a single transmitter and a single receiver. Let $W$ be a fixed $N$-dimensional vector space over $\Fq$. Then all transmitted and received packets are elements of $W$. Let $\cG(W)$ denote the set of all subspaces of $W$ which is often called the projective geometry of $W$. Let also $\cG(W,n)$ denote the set of all subspaces of $W$ of dimension $n$. For any $V \in \cG(W)$, $\dim(V)$ denotes the dimension of $V$. As $\cG(W)$ is the code alphabet, a metric on $\cG(W)$ is defined as follows. Let $\mathbb{Z}_{+}$ denote the set of non-negative integers. Then the function $d\ :\ \cG(W) \times \cG(W)\ \rightarrow\ \mathbb{Z}_{+}$ is defined as follows:
$$
d(A,B)\hspace{3mm}\deff\ \hspace{3mm} \dim(A+B)-\dim(A\cap B)
$$
It is shown in Lemma 1 in \cite{KK} that the function $d$ is indeed a metric.
\\\textbf{Definition\,1.}\cite{KK} An operator channel $C$ associated with the ambient space $W$ is a channel with input and output alphabet $\cG(W)$. Let $V$ and $U$ denote the input and output of the channel respectively. Then
$$
U=\mathcal{H}_k(V)\oplus E,
$$
where $\mathcal{H}_k$ is an erasure operator which projects $V$ onto a randomly chosen $k$-dimensional subspace of $V$ if $\dim(V)>k$; otherwise, $\mathcal{H}_k$ leaves $V$ unchanged. Also, $E \in \cG(W)$ is an arbitrary error space and $\oplus$ denote the direct sum. The number of erasures is $\rho=\dim(V)-\dim(\mathcal{H}_k(V))$. The number of errors is $t=\dim{E}$.
\hfill\raisebox{-0.5ex}{$\Box$}\vspace{1.0ex}

A subspace code $\cC$ for an operator channel with ambient space $W$ is a non-empty subset of $\cG(W)$. A codeword is an element of $\cC$ which is indeed a subspace of $W$. The minimum distance of $\cC$ is denoted by
$$
D(\cC) \hspace{3mm}\deff\ \hspace{3mm} \min_{X,Y\in \cC:X\neq Y} d(X,Y)
$$
It is proved in \cite{KK} that if
\be{eq21}
2(t+\rho) < D(\cC)
\ee
then a minimum distance decoder for $\cC$ will recover the transmitted subspace $V$ from the received subspace $U$. Conversely if \eq{eq21} is not satisfied, then the minimum distance decoder may fail.
\\\textbf{Definition\,2.}\cite{KK} Let $\mathcal{C}$ be a code associated with the ambient space $W$ of dimension $N$ over $\mathbb{F}_q$. Suppose that the dimension of any $V \in \mathcal{C}$ is at most $n$. Then the rate of the codes $R$ is defined as follows:
\be{symbol-rate}
R\hspace{6pt} {\deff} \hspace{6pt} \frac{\log_q\left|\mathcal{C}\right|}{nN}
\ee
In this paper, we define a new parameter, called the \emph{packet rate} of the code. The packet rate $\Rs$ is defined as follows:
\be{packet-rate}
\Rs \hspace{6pt} {\deff} \hspace{6pt} \frac{\log_{q^m}\left|\mathcal{C}\right|}{n} = 
\frac{\log_q\left|\mathcal{C}\right|}{nm}
\ee
where $q^m$ is the size of the underlying extension field. 
\hfill\raisebox{-0.5ex}{$\Box$}\vspace{1.0ex}

In fact, the rate $R$ of the code is equal to the number of $q$-ary information symbols normalized by the number of $q$-ary symbols injected into the network. This can be interpreted as the \emph{symbol rate} of the code. $\Rs$ is equal to the number of information packets normalized by the number of encoded packets injected into the network. Therefore, we call it the packet rate of the code.

Koetter-Kschischang construction of subspace codes are obtained by evaluating linearized polynomials over a certain set of linearly independent elements of an ambient space $W$. Next, we turn to briefly review linearized polynomials, their main properties and how to define a ring structure on them. Let $\Fq$ be a finite field and let $\F=\Fqm$ be an extension field. Recall from \cite[Ch. 4.9]{MS} that a polynomial $f(X)$ is called a linearized polynomial over $\F$ if it has the form
$$
f(X)=\sum^{s}_{i=0} a_i X^{q^i}
$$  
where $a_i \in \F$, for $i=0,1,\dots,s$. When $q$ is fixed under discussion, we will let $X^{[i]}$ denote $X^{q^i}$. We use the term $q$-degree instead of degree for linearized polynomials. For instance, assuming that $a_s \neq 0$, the linearized polynomial $f(X)$ has $q$-degree $s$ which means that its actual degree is equal to $q^s$.

The main property of linearized polynomials from which they receive their name is the following. Let $f(X)$ be a linearized polynomial over $\F$ and let $\mathbb{K}$ be an extension of $\F$. Then the map taking $\alpha \in \mathbb{K}$ to $f(\alpha) \in \mathbb{K}$ is linear with respect to $\Fq$, i.e. for all $\alpha_1,\alpha_2 \in \mathbb{K}$ and all $\lambda_1,\lambda_2 \in \Fq$,
$$
f(\lambda_1 \alpha_1+\lambda_2 \alpha_2)= \lambda_1 f(\alpha_1)+\lambda_2 f(\alpha_2)
$$
Also, it is proved in \cite{KK} that if two linearized polynomials of $q$-degree at most $k-1$ agrees on at least $k$ linearly independent points, then the two polynomials are identical. 

Addition of two linearized polynomials, $f_1(X)$ and $f_2(X)$, is also a linearized polynomial. However, the multiplication $f_1(X)f_2(X)$ is not necessarily a linearized polynomial. Therefore, in order to have a ring structure the operation $f_1(X) \otimes f_2(X)$ is defined to be the composition $f_1(f_2(X))$ which is always a linearized polynomial. In fact, if $f_1(X)=\sum_{i\geq 0} a_i X^{[i]}$ and $f_2(X)=\sum_{j\geq 0} b_j X^{[j]}$, then 
\be{eq20}
f_1(X) \otimes f_2(X)=f_1(f_2(X))=\sum_{k\geq 0} c_k X^{[k]},\\
\ee
where $\ c_k=\sum^{k}_{i=0} a_i b_{k-i}^{[i]}$. It should be noted that this operation is not commutative. It is easy to construct examples for $f_1(X)$ and $f_2(X)$ such that $f_1(X) \otimes f_2(X)$ is not equal to $f_2(X) \otimes f_1(X)$. 

The set of linearized polynomials over $\Fqm$ forms a non-commutative ring with identity under addition $+$ and composition $\otimes$ and is denoted by $\L_{q^m}[X]$. Though not commutative, the ring of linearized polynomials has many of the properties of a Euclidean domain. In fact, there are two division algorithms: a left division and a right division, i.e. given any two linearized polynomials $f_1(X)$ and $f_2(X)$, there exist unique linearized polynomials $q_L(X)$, $q_R(X)$, $r_L(X)$ and $r_R(X)$ such that
$$
f_1(X)=q_L(X)\otimes f_2(X) + r_L(X) = f_2(X) \otimes q_R(X) + r_R(X),
$$
where $r_L(X)=0$ or $\deg(r_L(X)) < \deg(f_2(X))$ and similarly where $r_R(X)=0$ or $\deg(r_R(X)) < \deg(f_2(X))$. A straightforward modification of polynomial division algorithm can be invoked in order to do left division and right division for linearized polynomials.

Now, we turn to briefly review the encoding and decoding of KK construction. Let $\Fq$ be a finite field, and let $\F=\Fqm$ be an extension field of $\Fq$. The number of information symbols $k$ and the dimension of code $n$ are also fixed. Notice that $\F$ can be regarded as a vector space of dimension $m$ over $\Fq$. Let $A=\left\{\alpha_1,\dots,\alpha_n\right\}$ be a set of $n$ linearly independent vectors in this vector space. 
\\\textbf{Koetter-Kschischang Encoding:}
\\The input to the encoder is a vector $\bu=(u_0,\dots,u_{k-1})$ which consists of $k$ message symbols in $\F$. The corresponding message polynomial is $\fbu(X)=\sum^{k-1}_{i=0} u_iX^{\left[i\right]}$. Then the corresponding codeword $V$ is the $\Fq$-linear span of the set $\left\{(\alpha_i,f(\alpha_i)):1 \leq i \leq n\right\}$.

The code $\cC$ is the set of all possible codeword $V$. The ambinet space $W$ is indeed equal to $\left\langle A\right\rangle \oplus \F = \left\{(\alpha,\beta):\alpha\in\left\langle A\right\rangle, \beta \in \F\right\}$ which has dimension $n+m$ over $\Fq$.

Suppose that $V$ is transmitted over the operator channel and a subspace $U$ of $W$ of dimension $r$ is received. 
\\\textbf{Koetter-Kschischang Decoding:}
\\Let $(x_i,y_i), i=1,2,\dots,r$ be a basis for $U$. Construct a nonzero bivariate polynomial $Q(X,Y)$ of the form
$$
Q(X,Y)=Q_0(X)+Q_1(Y),\ 
$$
where $Q_0$ and $Q_1$ are linearized polynomials over $\F$, $Q_0$ has $q$-degree at most $\omega-1$ and $Q_1$ has $q$-degree at most $\omega-k$ such that
$$
Q(x_i,y_i)=0\ \text{for}\ i=1,2,\dots,r 
$$
The parameter $\omega$ will be specified later. Then solve the equation $Q(X,f(X))=0$ for $f(X)$ to recover the message polynomial.

Suppose that $r=n-\rho+t$, where $\rho$ is the number of erasures and $t$ is the number of errors. It is proved in \cite{KK} that if $\rho+t < n-k+1=\frac{D(\cC)}{2}$ which is the necessary and sufficient condition for minimum distance decoder as in \eq{eq21}, then one can choose 
$$
\omega = \left\lceil \frac{r+k}{2}\right\rceil
$$ 
and the decoding algorithm successfully recovers the transmitted message. Therefore, the normalized decoding radius $\tau_{\text{KK}}$ is given as follows:
\be{tau-KK}
\tau_{\text{KK}} = \frac{n-k+1}{n} 
\ee
The packet rate $\Rs$ of the Koetter-Kschischang code is:
$$
\Rs=\frac{\text{log}_{q^m}(\left|\cC\right|)}{n}=\frac{k}{n}
$$
Therfore, $\tau_{\text{KK}}$ is approximately equal to $1-\Rs$.

\vspace{1ex}
\section{The Simplest List-decoding}
\vspace{.25ex}
\label{sec:three}

We start this section with a brief review of Sudan's list-decoding algorithm of Reed-Solomon codes . Then we justify why it is necessary to modify KK construction in order to enable list-decoding. As the first attempt for list-$2$ decoding, a simple generalization of KK construction is proposed in Section \ref{sec:three-A}. However, we shall see that a list of size $2$ can not be guaranteed as a result of the ring of linearized polynomials being non-commutative. Therefore, we further modify the construction to solve this problem in Section \ref{sec:three-B}. However, this modification results in a rate reduction by a factor of $m$. To compensate this reduction we exploit properties of a normal basis of $\Fqm$ over $\Fq$ in Section \ref{sec:three-C}. Having set all that we explain the encoding and list-decoding of this new construction of subspace codes in Section \ref{sec:three-D}. At the end, the parameters of the code are discussed in Section \ref{sec:three-E}.

First, we briefly review Sudan's list-decoding algorithm of Reed-Solomon codes \cite{S}. The construction of Reed-Solomon codes is as follows. Let $\Fq$ be a finite field. The parameters $k$, the number of information symbols, and $n$, the length of the code are fixed and $k \leq n \leq q-1$. The message is a vector $\bu=(u_0,u_1,\dots,u_{k-1})$ consisting of $k$ information symbols over $\Fq$. The corresponding codeword is $(\fbu(\al_1),\fbu(\al_2),\dots,\fbu(\al_n))$, where $\fbu (X) = \sum^{k-1}_{i=0} u_i X^i$ is the message polynomial and $\al_1,\al_2,\dots,\al_n$ are $n$ distinct and fixed elements of $\Fq$. This codeword is transmitted through the channel. Given the channel output $(y_1,y_2,\dots,y_n)$, Sudan's list-decoding algorithm constructs the bivariate interpolation polynomial
$$
Q(X,Y) = Q_0(X)+Q_1(X)Y+\dots+Q_L(X)Y^L
$$
such that $Q(\al_i,y_i)=0$ for all i, subject to certain degree constraints. Then if not too many errors 
have occurred, $Q(X,\fbu(X)) \equiv 0$, and the message can be recovered by finding all the factors (at most $L$ of them) of $Q(X,Y)$ of the form $Y-F(X)$.

Now, we return to the construction of subspace codes. Let $\Fq$ be a finite field and $\F=\Fqm$ be an extension field of $\Fq$. For ease of notation, let $f^{\otimes L}(X)$ denote the composition of $f(X)$ with itself $L$ times for any linearized polynomial $f(X)$. Indeed $f^{\otimes 1}(X)=f(X)$. Also, we define $f^{\otimes 0}(X)$ to be equal to $X$. Same as in the KK construction, $A=\left\{\alpha_1,\dots,\alpha_n\right\} \subset \Fqm$ is a fixed set of $n$ linearly independent vectors over $\Fq$. The first step in modifying the KK construction in order to enable list-$2$ decoding is the following. We transmit $\fbutwo (\al_i)$ along with $\al_i$ and $\fbu(\al_i)$, where $\fbu$ is the message polynomial.  This is one important difference between this work and Sudan list-decoding algorithm of RS codes. In Sudan's algorithm, there is no need to modify the Reed-Solomon code. One can compute powers of the received symbols $y_i$ at the decoder. In fact,  once $y_i$ is given, all powers of $y_i$ come for free whereas this is not the case in the construction of subspace codes. In general, given $\fbu(\al_i)$ one can not compute $\fbutwo (\al_i)$. This enforces the modification of KK construction which will be elaborated through this section.

\subsection{A basic generalization of KK codes}
\label{sec:three-A}

Based on the foregoing discussion the first attempt for a simple generalization of KK construction which enables a list-$2$ decoding is explained as follows. The message vector $\bu=(u_0,u_1,\dots,u_{k-1})$ consists of $k$ information symbols over $\F$. Let $\fbu(X)=\sum^{k-1}_{i=0} u_iX^{\left[i\right]}$ be the corresponding message polynomial. Then the corresponding codeword $V$ is the vector space spanned by the set $\left\{(\alpha_1,\fbu(\alpha_1), \fbutwo(\al_1)),\dots,(\alpha_n,\fbu(\alpha_n),\fbutwo(\al_n))\right\}$. Since $\alpha_i$'s are linearly independent, $V$ has dimension $n$. $V$ is transmitted through the operator channel and another vector space $U$ of dimension $r$ is received at the receiver. Let $(x_i,y_i,z_i),i=1,\dots,r$, be a basis for $U$. At the decoder, we construct a nonzero trivariate linearized polynomial $Q(X,Y,Z)$ of the form
\be{Q3-form}
Q(X,Y,Z)=Q_0(X)+Q_1(Y)+Q_2(Z)
\ee
where $Q_i$'s are linearized polynomials over $\F$ subject to certain degree constraints specified later, such that $Q(x_i,y_i,z_i)=0$ for $i=1,\dots,r$. Since $Q$ is linearized, it is zero over the whole vector space $U$, in particular over the intersection of $V$ and $U$. Therefore, assuming that not too many errors and erasures happen the polynomial 
$$
Q(X,\fbu(X),\fbutwo(X))=Q_0(X)+Q_1 \otimes \fbu(X)+Q_2 \otimes \fbutwo(X)
$$
is guaranteed to have a certain number of linearly independent roots which is more than its $q$-degree. Thus it is identically zero and the next step is to recover the message polynomial from it. The problem is how many possible solutions for $\fbu(X)$ we could have and how to extract them. Unfortunately there might be more than two solutions for $\fbu(X)$. In general, an equation over a non-commutative ring may have more zeros than its degree. We illustrate this for the ring of linearized polynomials in the following example. Consider the following equation:
$$
\ftwo(X)-X^{q^2}=0
$$
This can be regarded as an equation of degree $2$ over the ring of linearized polynomials. Then $f(X)=uX^q$ is a solution for this equation for any $u$ which satisfies $u^{q+1}=1$. If $m$ is even, then $q+1$ divides $q^m-1$. Therefore there are $q+1$ distinct possible values for $u$ each gives a distinct solution for $f(X)$. 

\subsection{Solving the problem of having more than two roots}
\label{sec:three-B}

As discussed in the forgoing subsection, an equation over the ring of linearized polynomials may have more zeros than its root.  This is a a consequence of the fact that the ring of linearized polynomials is not commutative. The idea to solve this problem is to restrict the set of message polynomials to a commutative subring of this ring. \Lref{lemma1} shows that linearized polynomials over the base field $\Fq$, $\L_q[X]$, form a commutative subring of $\L_{q^m}[X]$. \Tref{thm1} proves that an equation of degree $L$ over the ring of linearized polynomials has at most $L$ roots in $\L_{q}[X]$, as expected. This suggests the following solution for the problem of having more than two roots. We only consider message polynomials that are over $\Fq$ rather than $\Fqm$ i.e. we assume that the message is a vector $\bu = (u_0,u_1,\dots,u_{k-1})$ of length $k$ over $\Fq$.

\begin{lemma}
\label{lemma1}
Let $f(X)$ and $g(X)$ be linearized polynomials over $\Fq$. Then they commute i.e.
$$
f(X)\otimes g(X)=g(X)\otimes f(X)
$$
\end{lemma}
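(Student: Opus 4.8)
The plan is to prove commutativity of composition for linearized polynomials over the base field $\Fq$ by a direct computation using the multiplication formula \eq{eq20}. Writing $f(X)=\sum_{i\geq0} a_i X^{[i]}$ and $g(X)=\sum_{j\geq0} b_j X^{[j]}$ with all $a_i,b_j\in\Fq$, formula \eq{eq20} gives
$$
f(X)\otimes g(X)=\sum_{k\geq0}\Bigl(\sum_{i=0}^{k} a_i b_{k-i}^{[i]}\Bigr)X^{[k]},
\qquad
g(X)\otimes f(X)=\sum_{k\geq0}\Bigl(\sum_{i=0}^{k} b_i a_{k-i}^{[i]}\Bigr)X^{[k]}.
$$
So it suffices to show that for every $k$ the two inner sums agree.

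The key observation is that the coefficients lie in $\Fq$, and every element $c\in\Fq$ satisfies $c^{q}=c$, hence $c^{[i]}=c^{q^i}=c$ for all $i\geq0$. Applying this, the coefficient of $X^{[k]}$ in $f\otimes g$ simplifies to $\sum_{i=0}^{k} a_i b_{k-i}$, and the coefficient of $X^{[k]}$ in $g\otimes f$ simplifies to $\sum_{i=0}^{k} b_i a_{k-i}$. Re-indexing the second sum by $i\mapsto k-i$ turns it into $\sum_{i=0}^{k} b_{k-i} a_{i}$, which equals the first sum because multiplication in $\Fq$ is commutative. This establishes $f\otimes g=g\otimes f$ coefficientwise, completing the proof.

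I would also remark briefly that this makes $\L_q[X]$ a commutative subring of $\L_{q^m}[X]$: it is closed under addition trivially, closed under $\otimes$ since the formula above shows the product again has all coefficients in $\Fq$ (the Frobenius twists act trivially), and it contains the identity $X$. There is essentially no obstacle here — the only thing to be careful about is invoking the fixed-field property $c^{q^i}=c$ for $c\in\Fq$ at the right place, which is exactly what collapses the Frobenius twists $b_{k-i}^{[i]}$ that would otherwise obstruct commutativity over a larger field.
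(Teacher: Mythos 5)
Your proof is correct and follows essentially the same path as the paper's: apply the product formula \eq{eq20} to both $f\otimes g$ and $g\otimes f$, use the fixed-field property $c^{q^i}=c$ for $c\in\Fq$ to collapse the Frobenius twists, and conclude by commutativity of $\Fq$. The only cosmetic difference is that the paper reindexes the sum for $g\otimes f$ before collapsing the twists whereas you reindex afterward; the substance is identical.
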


\begin{proof}
Let $f(X)=\sum^{}_{i\geq 0} f_iX^{\left[i\right]}$ and $g(X)=\sum^{}_{j\geq 0} g_jX^{\left[i\right]}$. Then by \eq{eq20},
\\$f(X)\otimes g(X)=\sum^{}_{k\geq 0} c_kX^{\left[k\right]}$, where $c_k=\sum^{k}_{i=0}f_ig_{k-i}^{\left[i\right]}$ and
\\\\$g(X)\otimes f(X)=\sum^{}_{k\geq 0} c'_kX^{\left[k\right]}$, where $c'_k=\sum^{k}_{i=0}f_i^{\left[k-i\right]}g_{k-i}$.
\\Since $f_i,g_j\in \Fq$, $f_i^{\left[k-i\right]}=f_i^{q^{k-i}}=f_i$ and $g_{k-i}^{\left[i\right]}=g_{k-i}^{q^i}=g_{k-i}$, for any $i$ and $k$. It implies that for any $k$,
$$
c_k=\sum^{k}_{i=0}f_ig_{k-i}=c'_k
$$
Therefore, $f(X)\otimes g(X)=g(X)\otimes f(X)$. 
\end{proof}

\begin{theorem}
\label{thm1}
Let $Q_i$, $i=0,1,\dots,L$, be linearized polynomials over $\F$, where at least one of them is non-zero. Then the equation 
\be{eq11}
\sum^{L}_{i=0}Q_i \otimes f^{\otimes i}(X)=0
\ee
has at most $L$ solutions for $f(x)$ in $\L_q[X]$.
\end{theorem}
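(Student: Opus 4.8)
The plan is to reduce the claim about the noncommutative composition ring to an ordinary polynomial statement over $\Fq$. The key observation is that on $\L_q[X]$ the composition operation mirrors multiplication of ordinary polynomials: there is a ring isomorphism (or at least an injective ring homomorphism onto its image) $\varphi : \L_q[X] \to \Fq[T]$ sending $\sum_i a_i X^{[i]}$ to $\sum_i a_i T^i$, which is well-defined precisely because of \Lref{lemma1} — the coefficients lie in $\Fq$, so the Frobenius twists in \eq{eq20} disappear and $c_k = \sum_{i=0}^k f_i g_{k-i}$ is exactly the convolution giving ordinary multiplication. Thus $\L_q[X]$ is a commutative ring isomorphic to $\Fq[T]$, in particular an integral domain.

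First I would fix a putative solution $f \in \L_q[X]$ of \eq{eq11}, and regard the whole left-hand side as a function of $f$: I want to say that $f \mapsto \sum_{i=0}^L Q_i \otimes f^{\otimes i}(X)$ is a polynomial map of degree $L$ in the variable $f$. The subtlety is that the coefficients $Q_i$ are linearized polynomials over the \emph{big} field $\Fqm$, so they do not commute with $f$ under $\otimes$, and $Q_i \otimes f^{\otimes i}$ is genuinely ``$Q_i$ on the left.'' So I cannot simply apply $\varphi$ to the whole expression. Instead, I would work in the twisted polynomial ring $\L_{q^m}[X] \cong \Fqm[T;\sigma]$ (Ore extension, $\sigma$ the $q$-power Frobenius), inside which $\L_q[X]$ sits as the subring $\Fq[T]$ of central-ish elements that genuinely commute with everything in $\Fq[T]$ but only skew-commute past $\Fqm$. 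Then \eq{eq11} becomes $\sum_{i=0}^L Q_i \cdot t^i = 0$ in $\Fqm[T;\sigma]$ where $t = \varphi(f) \in \Fq[T]$, i.e. $t$ is a root of the polynomial $P(Y) = \sum_{i=0}^L Q_i Y^i$ with coefficients in $\Fqm[T;\sigma]$ and the indeterminate $Y$ commuting with the coefficients. The claim is that such a $P$, with at least one $Q_i \neq 0$, has at most $L$ roots $t$ lying in the commutative subring $\Fq[T]$.

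The main step — and the main obstacle — is a Bezout/factor-theorem argument valid over the noncommutative coefficient ring but restricted to roots in the commutative domain $\Fq[T]$. If $t_0 \in \Fq[T]$ is a root, then since $Y$ commutes with the $Q_i$ and $t_0$ is central in the subring generated by $\Fq[T]$, ordinary polynomial division gives $P(Y) = S(Y)\,(Y - t_0) + P(t_0) = S(Y)(Y-t_0)$ for some $S(Y)$ with coefficients in $\Fqm[T;\sigma]$ of degree $L-1$. Now if $t_1 \neq t_0$ is another root in $\Fq[T]$, then $0 = P(t_1) = S(t_1)\,(t_1 - t_0)$; since $t_1 - t_0$ is a nonzero element of the integral domain $\Fq[T]$, and — here is the delicate point I must check — $S(t_1)$ lies in a region where this product being zero forces $S(t_1) = 0$ (this works because $t_1 - t_0 \in \Fq[T]$ is not a left zero-divisor in $\Fqm[T;\sigma]$: the Ore extension over a field is a domain). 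Hence $t_1$ is a root of $S(Y)$, which has degree $L - 1$ and still has a nonzero coefficient (its leading coefficient is $Q_L$ if $Q_L \neq 0$; if $Q_L = 0$ I induct on the top nonzero index). By induction on $L$, $S(Y)$ has at most $L-1$ roots in $\Fq[T]$, so $P(Y)$ has at most $L$. Translating back through $\varphi^{-1}$, \eq{eq11} has at most $L$ solutions $f \in \L_q[X]$.

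I would be careful about two things in the write-up: first, the base case $L = 0$ (then $Q_0 \neq 0$ and the equation $Q_0 = 0$ has no solutions, vacuously at most $0$); and second, making the division-and-substitution step rigorous, namely that for $t_0 \in \Fq[T]$ the evaluation $Y \mapsto t_0$ is a ring homomorphism on the subring $\big(\Fqm[T;\sigma]\big)[Y]$ — which holds because $t_0$ commutes with all coefficients $Q_i$, the only place commutativity is needed, and that is exactly what restricting roots to $\L_q[X]$ buys us via \Lref{lemma1}. The phenomenon in the example of Section~\ref{sec:three-A} ($u X^q$ with $u^{q+1}=1$) is precisely excluded because $u \in \Fqm \setminus \Fq$ in general, so those spurious roots are not in $\L_q[X]$.
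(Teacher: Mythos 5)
Your proof takes essentially the same route as the paper's: fix one solution, right-factor $(Y - t_0)$ out of the degree-$L$ equation using the commutativity of the subring to which the roots are confined, cancel $(t_1 - t_0)$ because the ambient (Ore) ring is a domain, and induct. The Ore-extension framing $\L_{q^m}[X] \cong \Fqm[T;\sigma]$ is a perfectly good way to phrase what the paper does with explicit manipulations of $\otimes$.

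However, there is an error in your stated justification of the key substitution step, and it is worth being precise because it is exactly the place the whole argument lives or dies. You write that the evaluation $Y \mapsto t_0$ is a ring homomorphism ``because $t_0$ commutes with all coefficients $Q_i$,'' citing \Lref{lemma1}. This is false: the $Q_i$ have coefficients in $\Fqm$, and an element $t_0 \in \Fq[T]$ (e.g.\ $t_0 = T$) does \emph{not} commute with elements of $\Fqm \setminus \Fq$, since $Ta = a^q T$. \Lref{lemma1} only gives commutativity within $\L_q[X]$, not between $\L_q[X]$ and $\L_{q^m}[X]$. Consequently right-evaluation at $t_1$ is \emph{not} a ring homomorphism on $(\Fqm[T;\sigma])[Y]$. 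What saves the argument is weaker and more specific: after right-dividing to get $P(Y) = S(Y)(Y - t_0)$, the identity $\mathrm{ev}_{t_1}(S(Y)(Y-t_0)) = S(t_1)(t_1 - t_0)$ holds precisely because $t_0$ commutes with the powers $t_1^j$ — i.e.\ because $t_0$ and $t_1$ lie together in the commutative subring $\Fq[T]$, which \emph{is} what \Lref{lemma1} gives. (One can verify this directly: both sides equal $\sum_j S_j t_1^{j+1} - \sum_j S_j t_0 t_1^j$ once $t_0 t_1^j = t_1^j t_0$.) The paper encodes the same fact in the identity $f^{\otimes i} - f_0^{\otimes i} = \bigl(\sum_{j} f_0^{\otimes(i-j-1)} \otimes f^{\otimes j}\bigr) \otimes (f - f_0)$, which is valid only because $f$ and $f_0$ commute. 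So your conclusion and the structure of the proof are right, but the sentence ``$t_0$ commutes with all coefficients $Q_i$'' should be replaced by ``$t_0$ and $t_1$ commute with each other,'' and the claim that $\mathrm{ev}_{t_0}$ is a ring homomorphism should be dropped — it is neither true nor needed.
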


\begin{proof}
We do induction on $L$ for $L \geq 0$. For $L=0$, $Q_0$ has to be non-zero. Thus there is no solution for \eq{eq11}. Now, suppose that it is true for $L-1$ and we want to prove it for $L$. If \eq{eq11} does not have any solution for $f(X)$, then we are done. Otherwise, let $f_0(X)$ be a solution for \eq{eq11} i.e.
\be{eq12}
\sum^{L}_{i=0}Q_i \otimes f_0^{\otimes i}(X)=0
\ee
We show that there are at most $L-1$ other solutions excluding $f_0$. 
Subtracting \eq{eq12} from \eq{eq11} we get
\be{eq13}
\sum^{L}_{i=1}Q_i\otimes(f^{\otimes i}-f_0^{\otimes i})=0
\ee
Since $f$ and $f_0$ are both over $\Fq$, by \Lref{lemma1} they commute. As a result,
$$
f^{\otimes i}-f_0^{\otimes i}=(\sum^{i-1}_{j=0}f_0^{\otimes (i-j-1)}\otimes f^{\otimes j})\otimes (f-f_0)
$$
for any $i \geq 1$. Plugging in this into \eq{eq13} we get
\begin{equation*}
\begin{split}
&\sum^{L}_{i=1}Q_i\otimes \bigl( \sum^{i-1}_{j=0} (f_0^{\otimes (i-j-1)}\otimes f^{\otimes j})\otimes (f-f_0) \bigr)=0 \Rightarrow\\
&\bigl(\sum^{L}_{i=1}Q_i\otimes \sum^{i-1}_{j=0}f_0^{\otimes (i-j-1)}\otimes f^{\otimes j} \bigr)\otimes(f-f_0)=0
\end{split}
\end{equation*}
Since $f-f_0\neq 0$, we can divide by both sides by $f-f_0$ to get
\begin{equation*}
\begin{split}
&\sum^{L}_{i=1}\bigl(Q_i\otimes \sum^{i-1}_{j=0} f_0^{\otimes (i-j-1)}\otimes f^{\otimes j}\bigr)=0 \Rightarrow\\
&\sum^{L-1}_{j=0}\bigl(\sum^{L}_{i=j+1}Q_i \otimes f_0^{\otimes (i-j-1)}\bigr)\otimes f^{\otimes j}=0
\end{split}
\end{equation*}
which has at most $L-1$ solutions for $f(X)$ by induction hypothesis. This completes the proof.
\end{proof}

\subsection{Solving the rate reduction problem}
\label{sec:three-C}

As discussed in the forgoing subsection, we suppose that the message vector $\bu=(u_0,\dots,u_{k-1})$ consists of $k$ information symbols over $\Fq$ rather than $\Fqm$. This results in a reduction in rate by a factor of $m$. In this subsection, we propose a solution for the rate reduction problem. Indeed, we take advantage of the fact that the message polynomial is over the base field $\Fq$ in order to compensate the rate reduction at the decoder. 

Recall from \cite[Ch. 4.9]{MS} that any finite extension $\Fqm$ of $\Fq$ as a vector space over $\Fq$ has a basis of the form $\alpha,\alpha^q,\dots,\alpha^{q^{m-1}}$ for some primitive element $\alpha \in \Fqm$. This is called a normal basis for $\Fqm$ over $\Fq$. Suppose that $f(X)$ is a linearized polynomial over $\Fq$. Then for any $j$, $f(\alpha^{q^j})=f(\alpha)^{q^j}$. This implies that given $f(\alpha)$ one can determine $f(\alpha^{q^j})$, for $j=1,2,\dots,m-1$. Therefore, $f(\al^q), f(\al^{q^2}), \dots, f(\al^{q^{m-1}})$ do not need to be transmitted. The idea is to manufacture them at the receiver while only $f(\al)$ is transmitted. We elaborate this idea in the next subsection by specifying encoding and list-decoding algorithm. 

\subsection{Encoding and List-decoding}
\label{sec:three-D}

We fix the following parameters of the code: finite field $\Fq$, an extension $\F=\Fqm$, number of information symbols $k$ and $\al \in \F$ which generates a normal basis for $\F$. We require that $k \leq m$. The ambient space $W$ is a $2m+1$-dimensional vector space over $\Fq$.  
\\\textbf{Encoding Algorithm A:}
 \\Formally, the encoder is a function $\cE\! : \Fq^k \! \to \cG(W,n)$. The input to the encoder is a message $\bu =(u_0,u_1,\dots,u_{k-1}) \in \Fq^k$. The corresponding message polynomial is $\fbu (X) = \sum^{k-1}_{i=0} u_i X^{\left[i\right]}$. The output of the encoder is the one dimensional subspace $V$ as follows:
 $$
 V=\left\langle \bigl(\al,\fbu(\al),\fbutwo(\al)\bigr)\right\rangle
 $$ 
\\\textbf{Remark.\,} We write each element of the ambient space $W$ as a vector with $3$ coordinates such as $(x,y,z)$, where $x \in \left\langle \al\right\rangle$ and $y,z \in \Fqm$. 
\hfill\raisebox{-0.5ex}{$\Box$}\vspace{1.0ex}

Suppose that $V$ is transmitted through the operator channel and another subspace $U$ of $W$ of dimension $1+t$ is received, where $t$ is the dimension of error. We assume that no erasure happens as  only one erasure may destroy all the information. The decoder first check if the following condition on $t$ is satisfied:
\be{error-bound2}
t < 2 - \frac{3(k-1)}{m}
\ee
If not, then the decoder declares decoding failure. Otherwise, the list-decoding algorithm A is performed. 
\\\textbf{List-decoding Algorithm A:}
\\ The input to the decoder is the received vector space $U$. It outputs a list of size at most $2$ of vectors in $\Fq^k$ in three steps:
\begin{enumerate}
\item \textit{Computing the interpolation points:} The decoder first finds a basis $(x_i,y_i,z_i), i=1,2,\dots,t+1$ for $U$. Then the set of interpolation points $\cP$ is as follows:
$$
\cP = \left\{\bigl(x_i^{q^j},y_i^{q^j},z_i^{q^j}\bigr) : 1 \leq i \leq t+1, 0 \leq j \leq m-1 \right\}
$$

\item \textit{Interpolation:} Construct a nonzero trivariate linearized polynomial $Q(X,Y,Z)$ of the form in \eq{Q3-form}, where $Q_0$, $Q_1$ and $Q_2$ are linearized polynomials over $\F$ and $Q_0$ has $q$-degree at most $m-1$, $Q_1$ has $q$-degree at most $m-k$ and $Q_2$ has $q$-degree at most $m-2k+1$, subject to the constraint that
\be{interpolation2}
Q(x,y,z)=0 
\ee
for any $(x,y,z) \in \cP$.
\item \textit{Factorization:} Find all the roots $f(X) \in \L_q[X]$, with degree at most $k-1$ of the equation: 
$$
Q\bigl(X,f(X),\ftwo(X)\bigr) = 0
$$
using LRR algorithm, discussed in the appendix. The decoder outputs coefficients  of each root $f(X)$ as a vector of length $k$.
\end{enumerate}

We explain how the several steps of this list-decoding approach can be done in its most general case in Section \ref{sec:five}. 
\begin{theorem}
\label{thm-decoding-radius2}
List-decoding algorithm A produces a list of size at most $2$ which includes the transmitted message $\bu$ if the number of errors $t$ is less than $ 2- \frac{3(k-1)}{m}$. 
\end{theorem}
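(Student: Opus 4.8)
The plan is to run the standard Sudan-type argument in three movements --- the interpolation polynomial exists, the message polynomial is forced to be one of its roots, and the root list is short --- but with the usual ``$V$ and $U$ agree on many points'' estimate replaced by the normal-basis expansion that is built into the point set $\cP$. First I would check that the interpolation step always succeeds. The conditions \eq{interpolation2} form a homogeneous system of at most $|\cP|\le m(t+1)$ $\F$-linear equations in the coefficients of $Q_0,Q_1,Q_2$, while the number of unknowns is $m+(m-k+1)+(m-2k+2)=3(m-k+1)$; hence a nonzero admissible $Q$ exists whenever $3(m-k+1)>m(t+1)$, which rearranges to exactly the hypothesis $t<2-\frac{3(k-1)}{m}$. (In the range where an error is actually being corrected one checks $m-2k+1\ge0$, so the $q$-degree budget for $Q_2$ is non-vacuous.)

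Next, and this is the heart of the matter, I would show that $E(X)\deff Q_0(X)+Q_1\otimes\fbu(X)+Q_2\otimes\fbutwo(X)$ is the zero polynomial. From the $q$-degree bounds on $Q_0,Q_1,Q_2$ each of the three summands has $q$-degree at most $m-1$, so $E$ has $q$-degree at most $m-1$. It then remains to produce $m$ linearly independent zeros of $E$, and this is where the construction pays off: since no erasure occurs, $V\subseteq U$, so $\bigl(\al,\fbu(\al),\fbutwo(\al)\bigr)$ is an $\Fq$-linear combination of the basis vectors $(x_i,y_i,z_i)$ of $U$. Raising that identity to the $q^j$-th power leaves the $\Fq$-coefficients fixed, and because $\fbu$ and $\fbutwo=\fbu\otimes\fbu$ both lie in $\L_q[X]$ we have $\fbu(\al)^{q^j}=\fbu(\al^{q^j})$ and $\fbutwo(\al)^{q^j}=\fbutwo(\al^{q^j})$. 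Thus $\bigl(\al^{q^j},\fbu(\al^{q^j}),\fbutwo(\al^{q^j})\bigr)$ is an $\Fq$-combination of the points $(x_i^{q^j},y_i^{q^j},z_i^{q^j})\in\cP$; since $Q$ is $\Fq$-linear in its three arguments jointly and vanishes on all of $\cP$, it vanishes at each such point, i.e. $E(\al^{q^j})=0$ for $j=0,1,\dots,m-1$. As $\al,\al^{q},\dots,\al^{q^{m-1}}$ is a normal basis these are $m$ linearly independent points, so by the interpolation-uniqueness fact of \cite{KK} (a linearized polynomial of $q$-degree at most $m-1$ with $m$ linearly independent zeros is $0$) we conclude $E\equiv0$.

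Finally, $E\equiv0$ says exactly that $\fbu$ solves $\sum_{i=0}^{2}Q_i\otimes f^{\otimes i}(X)=0$ over $\L_q[X]$, and since $Q\ne0$ not all $Q_i$ vanish, so \Tref{thm1} (whose proof rests on the commutativity established in \Lref{lemma1}) bounds the number of solutions in $\L_q[X]$ by $2$. The LRR algorithm used in the factorization step returns all such solutions of $q$-degree at most $k-1$, one of which is $\fbu$, so the decoder outputs a list of at most two vectors in $\Fq^k$ that contains $\bu$. I expect the middle step --- extracting $m$ linearly independent roots of $E$ from the single vector that spans $V$, by means of the normal basis --- to be the only genuinely new ingredient; the first step is a dimension count and the last is a direct appeal to \Tref{thm1}.
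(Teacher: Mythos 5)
Your proof is correct and follows exactly the paper's strategy: it is the $L=2$ specialization of Lemma~\ref{lemma10} (dimension count for existence of $Q$), Lemma~\ref{lemma11} and Corollary~\ref{cor2} (the normal-basis argument that $\al^{q^j}$, $j=0,\dots,m-1$, are roots of $E$, forcing $E\equiv 0$), and Theorem~\ref{thm1} (at most $L$ roots in $\L_q[X]$), which is precisely how the paper justifies omitting a separate proof of this theorem. The only cosmetic discrepancy is that you note $m-2k+1\ge 0$ as the condition for $Q_2$'s degree budget, whereas the paper states the slightly stronger $2k\le m$; this does not affect the argument.
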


We omit the proof of this theorem as it is a special case of \Tref{thm-decoding-radius}. Indeed list-decoding algorithm with general list size will be discussed in the next section which includes list-2 decoding as a special case. Notice that \Tref{thm-decoding-radius2} shows \eq{error-bound2} is also sufficient for successful decoding. This provides the error decoding radius of list-decoding algorithm A.

\subsection{Code parameters}
\label{sec:three-E}

The ambient space $W$, in the construction proposed by encoding algorithm A, has dimension $2m+1$. Each codeword is a one dimensional subspace of $W$. Thus $n=1$ and the packet rate $\Rs$ of the code is given as follows:
$$
\Rs=\frac{\text{log}_{q^m}(\left|\cC\right|)}{n}=\frac{k}{m}
$$
The $q$-degree of $Q_2$ has to be non-negative which enforces the condition $2k \leq m$. It implies that the packet rate $\Rs$ has to be less than $\frac{1}{2}$. The decoding radius normalized by dimension $n=1$, is given by $\tau = 2-\frac{3(k-1)}{m}$ as a result of \Tref{thm-decoding-radius2}. This is roughly equal to $2-3\Rs$. Since $\tau$ is integer in this case, $\tau=1$  for $\Rs < \frac{1}{3}$ and otherwise, $\tau=0$.


\section{One Dimensional Codes with General List Size }
\vspace{.25ex}
\label{sec:four}

In this section, we generalize the encoding and list-2 decoding algorithm proposed in the foregoing section to general list size yet the construction is one dimensional. To this end, we transmit all powers of $\fbu(X)$ up to $\fl(X)$, where $\fbu$ is the message polynomial, in order to do list-$L$ decoding at the receiver. 

\subsection{Encoding and Decoding}
The following parameters of the code are fixed: finite field $\Fq$, an extension $\F=\Fqm$, number of information symbols $k$, list size $L$ and $\al \in \F$ which generates a normal basis for $\F$. The required condition is that $k \leq m$. The ambient space $W$ is an $Lm+1$-dimensional vector space over $\Fq$.  
\\\textbf{Encoding Algorithm B:}
 \\Formally, the encoder is a function $\cE\! : \Fq^k \! \to \cG(W,n)$. It accepts as input a message vector $\bu =(u_0,u_1,\dots,u_{k-1}) \in \Fq^k$. The message polynomial is constructed as $\fbu (X) = \sum^{k-1}_{i=0} u_i X^{\left[i\right]}$. Then the encoder outputs the following one dimensional subspace $V$:
 $$
 V=\left\langle \bigl(\al,\fbu(\al),\fbutwo(\al),\dots, \fl(\al)\bigr)\right\rangle
 $$ 
 \\\textbf{Definition\,3.} The code $\cC_q(k,1,m,L)$ is the collection of all possible codewords $V$ generated by this encoding algorithm. The second parameter stands for the dimension of the code which is equal to $1$ for this code. 
\hfill\raisebox{-0.5ex}{$\Box$}\vspace{1.0ex}
\\\textbf{Remark.\,} Each element of the ambient space $W$ is indicated as a vector with $L+1$ coordinates such as $(x,y_1,y_2,\dots,y_L)$, where $x \in \left\langle \al\right\rangle$ and all other coordinates are elements of $\Fqm$. 
\hfill\raisebox{-0.5ex}{$\Box$}\vspace{1.0ex}

Suppose that $V$ is transmitted through the operator channel and another subspace $U$ of $W$ of dimension $1+t$ is received, where $t$ is the dimension of error. We assume that no erasure happens as  only one erasure may destroy all the information. The decoder first check if the following condition on $t$ is satisfied:
\be{error-bound-L}
t  <  L-\frac{L(L+1)}{2}\frac{(k-1)}{m}
\ee
If not, then the decoder declares decoding failure. Otherwise, the decoder performs the list-decoding algorithm B. 
\\\textbf{List-decoding Algorithm B:}
\\ The decoder accepts as input the received vector space $U$. The output is a list of size at most $L$ of vectors in $\Fq^k$ after the following three steps:
\begin{enumerate}
\item \textit{Computing the interpolation points:} Find a basis $(x_i,y_{i,1},\dots,y_{i,L}), i=1,2,\dots,t+1$ for $U$. Then the following set is the set of interpolation points $\cP$:
$$
 \left\{\bigl(x_i^{q^j},y_{i,1}^{q^j},\dots,y_{i,L}^{q^j}\bigr) : 1 \leq i \leq t+1, 0 \leq j \leq m-1 \right\}
$$

\item \textit{Interpolation:} Construct a nonzero multivariate linearized polynomial $Q(X,Y_1,\dots,Y_L)$ of the form 
$$
Q_0(X)+Q_1(Y_1)+\dots+Q_L(Y_L)
$$
with each $Q_i$ having $q$-degree at most $m-(k-1)i-1$, for $i=0,1,\dots,L$, subject to the constraint that
\be{interpolation-L}
Q(x,y_1,\dots,y_L)=0 
\ee
for any $(x,y_1,\dots,y_L) \in \cP$.
\item \textit{Factorization:} Find all the roots $f(X) \in \L_q[X]$, with degree at most $k-1$ of the equation: 
\be{factorization-L}
Q\bigl(X,f(X),\dots,f^{\otimes L(X)}\bigr) = 0
\ee
using LRR algorithm. The decoder outputs coefficients  of each root $f(X)$ as a vector of length $k$.
\end{enumerate}

\subsection{Correctness of the algorithm}

\begin{lemma}
\label{lemma10}
There is a non-zero solution for multivariate linearized polynomial $Q$ which satisfies \eq{interpolation-L} provided that
$$
t < L-\frac{L(L+1)}{2}\frac{(k-1)}{m}
$$
\end{lemma}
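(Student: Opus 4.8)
The statement is a standard "interpolation always succeeds when the system is underdetermined" argument, so the plan is to set it up as a counting argument: the interpolation constraints form a homogeneous linear system over $\Fq$ (or over $\F$, viewing the $Q_i$ as unknowns), and I want to show the number of free parameters strictly exceeds the number of constraints, which forces a nonzero solution.

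\emph{Step 1: Count the unknowns.} Each $Q_i$ is a linearized polynomial over $\F=\Fqm$ of $q$-degree at most $m-(k-1)i-1$, hence has $m-(k-1)i$ coefficients in $\F$, provided this quantity is positive (when it is nonpositive the corresponding $Q_i$ is simply absent, but the bound on $t$ will be arranged so that this only helps). Summing over $i=0,1,\dots,L$, the total number of $\F$-coefficients — i.e.\ the dimension over $\Fq$ of the space of candidate polynomials $Q$, divided by $m$ — is $\sum_{i=0}^{L}\bigl(m-(k-1)i\bigr) = (L+1)m - (k-1)\frac{L(L+1)}{2}$.

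\emph{Step 2: Count the constraints.} The interpolation set $\cP$ has $(t+1)m$ points. Each constraint $Q(x,y_1,\dots,y_L)=0$ is one $\Fq$-linear equation on the coefficients once we fix an $\Fq$-basis; but since $Q$ is $\Fq$-linearized and the points of $\cP$ come in Frobenius orbits of the $t+1$ basis vectors of $U$, I should check that imposing $Q=0$ on the orbit of a single basis vector $(x_i,y_{i,1},\dots,y_{i,L})$ is equivalent to a single $\F$-linear equation on the $\F$-coefficients of the $Q_j$'s — indeed $Q(x_i^{q^j},\dots)=Q(x_i,\dots)^{q^j}$ is not quite right since the $Q_j$ have coefficients in $\F$, not $\Fq$; rather the right way is to note each point gives one $\Fq$-equation, so $(t+1)m$ points give at most $(t+1)m$ $\Fq$-equations, and compare against $\bigl((L+1)m - (k-1)\tfrac{L(L+1)}{2}\bigr)m$ unknowns over $\Fq$ — wait, that mismatches the scaling. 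The cleaner bookkeeping: count everything in $\Fq$-dimensions. Unknowns: $m\bigl((L+1)m-(k-1)\tfrac{L(L+1)}{2}\bigr)$. Constraints: at most $m(t+1)$ — here is the subtlety, because evaluating an $\F$-valued $\Fq$-linear form at the $m$ Frobenius conjugates of one vector gives $m$ values that are the conjugates of one element of $\F$, hence they vanish together iff that one element of $\F$ vanishes, which is $m$ $\Fq$-conditions; so actually the $(t+1)m$ points impose only $(t+1)m$ $\Fq$-conditions total, matching. I would state this conjugacy reduction carefully as a sub-lemma.

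\emph{Step 3: Conclude.} A nonzero $Q$ exists as soon as $\#\text{unknowns} > \#\text{constraints}$, i.e.\ $(L+1)m - (k-1)\tfrac{L(L+1)}{2} > t+1$, which rearranges to $t < (L+1)m - (k-1)\tfrac{L(L+1)}{2} - 1$; dividing the hypothesis by $m$ shows the assumed bound $t < L - \tfrac{L(L+1)}{2}\tfrac{k-1}{m}$ is essentially this inequality up to the normalization and the "$-1$" slack, so I would track the floor/ceiling and the strict inequality to land exactly on the claimed threshold. The main obstacle is not the counting itself but getting the Frobenius-conjugacy reduction in Step 2 exactly right — making sure that the $m$-fold blow-up of $\cP$ really does not cost a factor of $m$ in effective constraints, which is precisely what makes the normal-basis trick from Section~\ref{sec:three-C} pay off — and keeping the off-by-one bookkeeping consistent with the strict inequality in the lemma's hypothesis.
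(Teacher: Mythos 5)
Your Step 1 and the overall strategy (count unknowns vs.\ constraints in the homogeneous interpolation system) are exactly the paper's, but your Step 2 contains a fatal error that your own remark already flags and then ignores. You correctly observe that $Q(x^{q^j},\dots) \neq Q(x,\dots)^{q^j}$ because the $Q_i$ have coefficients in $\F = \Fqm$, not $\Fq$. That observation kills the ``conjugacy reduction'' you invoke two sentences later: the $m$ Frobenius conjugates of a basis vector of $U$ do \emph{not} produce $m$ values that are conjugates of a single element of $\F$, so imposing $Q=0$ on one Frobenius orbit is \emph{not} a single $\F$-equation. Each of the $(t+1)m$ points in $\cP$ imposes one genuine $\F$-linear equation on the $\F$-coefficients of the $Q_i$ (equivalently $m$ $\Fq$-conditions), so the correct constraint count is $(t+1)m$ equations over $\F$, not $t+1$.

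The paper's proof does precisely the clean count you abandoned: $(t+1)m$ $\F$-equations versus $(L+1)m - (k-1)\tfrac{L(L+1)}{2}$ $\F$-unknowns, and the underdetermined condition
\[
m(1+t) \;<\; (L+1)m - (k-1)\tfrac{L(L+1)}{2}
\]
rearranges exactly to $t < L - \tfrac{L(L+1)}{2}\tfrac{k-1}{m}$, with no floor/ceiling slack to track. Your final inequality $t+1 < (L+1)m - (k-1)\tfrac{L(L+1)}{2}$ is not a ``normalization away'' from this; the left-hand side is missing a factor of $m$, so your threshold is roughly $m$ times too permissive, and the claimed conclusion is in fact false for such large $t$ (the system would then be heavily overdetermined). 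A smaller conceptual point: the normal-basis trick does \emph{not} ``pay off'' here by collapsing constraints. The $m$-fold blow-up of $\cP$ costs a genuine factor of $m$ in constraints in this lemma; its payoff is elsewhere, namely in Lemma~\ref{lemma11}/Corollary~\ref{cor2}, where the conjugate points guarantee $E(X)$ has $m$ linearly independent roots and hence vanishes identically. To fix your proof, simply delete the orbit-collapsing claim and count each point of $\cP$ as one $\F$-equation.
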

\begin{proof}
The set of interpolation points $\cP$ contains $m(1+t)$ points. Therefore, \eq{interpolation-L} defines a homogeneous linear system of $m(1+t)$ equations. The number of unknown coefficients is equal to
$$
\sum^{L}_{i=0} m-(k-1)i=(L+1)m-(k-1)\frac{L(L+1)}{2}
$$
It is known that if the number of variables in a homogeneous linear system of equation is strictly smaller than the number of equation, then there is a non-trivial solution. Furthermore, this is necessary in order to guarantee a non-trivial solution i.e.  
$$
m(1+t)< (L+1)m-(k-1)\frac{L(L+1)}{2}
$$
guarantees a non-zero solution for $Q$. This is equivalent to 
$$
t < L-\frac{L(L+1)}{2}\frac{(k-1)}{m}
$$
\end{proof}
\begin{corollary}
\label{cor1}
The condition \eq{error-bound-L} is necessary to guarantee existence of interpolation polynomial $Q$ in list-decoding algorithm B.
\end{corollary}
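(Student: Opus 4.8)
The plan is to read the statement straight off the dimension count already carried out inside the proof of \Lref{lemma10}, so the argument is short. First I would recall that the interpolation requirement \eq{interpolation-L} is nothing but a homogeneous $\Fqm$-linear system in the coefficients of $Q_0,Q_1,\dots,Q_L$: each point of $\cP$ contributes exactly one $\Fqm$-linear equation. The set $\cP$ is obtained by adjoining, to each of the $1+t$ chosen basis vectors of $U$, its $m$ Frobenius conjugates, so it has $m(1+t)$ points (with equality as long as no basis vector is fixed by Frobenius, which is the generic situation and the only one relevant here). Hence the number of equations is $m(1+t)$, while the number of unknowns is the number of coefficients permitted by the $q$-degree bounds, namely $\sum_{i=0}^{L}\bigl(m-(k-1)i\bigr)=(L+1)m-(k-1)\frac{L(L+1)}{2}$.

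Next I would invoke the elementary fact that a homogeneous linear system of a prescribed shape is guaranteed to admit a non-zero solution, \emph{for every} coefficient matrix of that shape, if and only if the number of unknowns strictly exceeds the number of equations. Applying this, the requirement that list-decoding algorithm~B produce a non-zero interpolation polynomial $Q$ for \emph{every} received space $U$ of dimension $1+t$ forces $m(1+t)<(L+1)m-(k-1)\frac{L(L+1)}{2}$; dividing through by $m$ and rearranging gives precisely \eq{error-bound-L}. This already establishes the corollary in the sense in which it is stated, since it shows that if \eq{error-bound-L} is violated then this guarantee fails; indeed the corollary is most naturally read as just the converse half of the proof of \Lref{lemma10}.

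The only point that requires genuine work is making the converse fully constructive, i.e.\ exhibiting an \emph{actual} received space $U$ of dimension $1+t$ admitting no non-zero $Q$ once \eq{error-bound-L} fails. For this I would choose $1+t$ linearly independent vectors of $W$ whose $\Fqm$-coordinates are in sufficiently general position that the $m(1+t)$ resulting evaluation functionals $Q\mapsto Q(p)$, $p\in\cP$, are $\Fqm$-independent on the coefficient space of $(Q_0,\dots,Q_L)$; equivalently, that the associated $m(1+t)\times\bigl((L+1)m-(k-1)\frac{L(L+1)}{2}\bigr)$ matrix has full column rank, which is possible precisely because it has at least as many rows as columns. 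Concretely one can try to arrange the coordinates so that this matrix contains a non-singular triangular square submatrix. I expect this genericity/rank step to be the main obstacle; the counting itself is immediate from \Lref{lemma10}.
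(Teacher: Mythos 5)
Your first two paragraphs reproduce exactly what the paper does: Corollary~\ref{cor1} has no separate argument in the paper, it is read straight off the dimension count in the proof of \Lref{lemma10}, where the authors assert that ``number of unknowns strictly exceeding number of equations'' is both sufficient and necessary to guarantee a nontrivial solution of a homogeneous linear system of that shape. Your restatement of that bookkeeping, including the count $m(1+t)$ equations versus $(L+1)m-(k-1)\frac{L(L+1)}{2}$ unknowns, is faithful and correct, and so is your observation that the corollary as stated is really just the converse half of \Lref{lemma10}.

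Your third paragraph is where you diverge, and the divergence is to your credit: you correctly notice that the ``elementary fact'' you invoke is a statement about \emph{arbitrary} coefficient matrices of a given shape, whereas the matrices arising in list-decoding algorithm~B are structured (they are built from Frobenius orbits of basis vectors of a received space $U$). To upgrade the corollary to the stronger constructive claim --- that when \eq{error-bound-L} fails there actually exists a received $U$ admitting no nonzero $Q$ --- one would indeed need a genericity or explicit full-rank argument of the kind you sketch. The paper does not carry out that step, and neither do you; you flag it as the main remaining obstacle and leave it open. So relative to what the paper actually proves, your proposal is complete and takes the same route; the unresolved rank/genericity issue you identify is a genuine gap, but it is a gap in the paper as much as in your write-up, and your awareness of it is a plus rather than a defect.
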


Let $\fbu(X)$ be the message polynomial and $Q(X,Y_1,\dots,Y_L)$ be the interpolation polynomial provided by list-decoding algorithm B. Then let $E(X)$ be the following linearized polynomial:
$$
E(X)=Q\bigl(X,\fbu(X),\dots, \fl(X)\bigr)=\sum^{L}_{i=0}Q_i \otimes \fbu^{\otimes i}(X)
$$
\begin{lemma}
\label{lemma11}
For $j=0,1,\dots,m-1$, $\al^{q^j}$ is a root of $E(X)$.
\end{lemma}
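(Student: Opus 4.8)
The plan is to show that each $\al^{q^j}$, $j=0,1,\dots,m-1$, lies in the root space of the linearized polynomial $E(X)$, by relating it back to the interpolation condition \eq{interpolation-L}. First I would use the hypothesis of \Tref{thm-decoding-radius} (that not too many errors occur, so $t$ satisfies \eq{error-bound-L}): then the transmitted one-dimensional space $V = \Span{(\al,\fbu(\al),\dots,\fl(\al))}$ and the received space $U$ intersect nontrivially, so the codeword vector $(\al,\fbu(\al),\dots,\fl(\al))$ itself lies in $U$ (up to scalar, and since $Q$ is linearized, scalars are harmless). Actually, cleaner: since $\dim V = 1$ and $\dim U = 1+t$ while the error space has dimension $t$, the codeword generator is in $U$ provided the erasure count is zero, which is assumed. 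Hence the basis $(x_i,y_{i,1},\dots,y_{i,L})$ of $U$ spans a space containing $(\al,\fbu(\al),\dots,\fl(\al))$.

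Next, I would invoke that $Q$ is a \emph{linearized} polynomial in each of its $L+1$ arguments, hence $Q(x,y_1,\dots,y_L)$ is $\Fq$-linear in $(x,y_1,\dots,y_L)$. By construction of the interpolation point set $\cP$, we have $Q(x_i^{q^j},y_{i,1}^{q^j},\dots,y_{i,L}^{q^j}) = 0$ for all $i$ and all $0 \le j \le m-1$. Taking $\Fq$-linear combinations of the vectors $(x_i^{q^j},\dots,y_{i,L}^{q^j})$ over $i$ keeps us in the kernel of $Q$; and since the Frobenius $\beta \mapsto \beta^{q^j}$ is an $\Fq$-linear bijection, applying it to the codeword generator gives $(\al^{q^j},\fbu(\al)^{q^j},\dots,\fl(\al)^{q^j})$, which is the $q^j$-th Frobenius image of a vector in $U = \Span{\{(x_i,\dots,y_{i,L})\}}$, hence an $\Fq$-linear combination of the vectors $(x_i^{q^j},\dots,y_{i,L}^{q^j}) \in \cP$. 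Therefore $Q(\al^{q^j},\fbu(\al)^{q^j},\dots,\fl(\al)^{q^j}) = 0$.

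The final step is to rewrite this as a statement about $E$. Since $\fbu(X)$ is a linearized polynomial \emph{over $\Fq$}, the normal-basis property from Section~\ref{sec:three-C} gives $\fbu(\al^{q^j}) = \fbu(\al)^{q^j}$, and iterating, $\fbu^{\otimes i}(\al^{q^j}) = \bigl(\fbu^{\otimes i}(\al)\bigr)^{q^j}$ for each $i = 1,\dots,L$ (this is where commutativity/$\Fq$-coefficients is essential — the composition $\fbu^{\otimes i}$ is again over $\Fq$, so Frobenius commutes with evaluation). Hence
$$
E(\al^{q^j}) = \sum_{i=0}^{L} Q_i\bigl(\fbu^{\otimes i}(\al^{q^j})\bigr) = \sum_{i=0}^{L} Q_i\Bigl(\bigl(\fbu^{\otimes i}(\al)\bigr)^{q^j}\Bigr) = Q\bigl(\al^{q^j},\fbu(\al)^{q^j},\dots,\fl(\al)^{q^j}\bigr) = 0,
$$
using $\fbu^{\otimes 0}(\al^{q^j}) = \al^{q^j} = \al^{q^j}$ for the $i=0$ term. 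This gives the claim.

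The main obstacle I anticipate is the bookkeeping in the second step: making precise that the Frobenius image of the codeword generator really is an $\Fq$-combination of the interpolation points, i.e. that "$U$ contains the codeword generator" together with "$\cP$ contains all Frobenius powers of a fixed $U$-basis" genuinely forces $Q$ to vanish at the Frobenius powers of the codeword generator. The subtlety is that Frobenius is only $\Fq$-linear, not $\Fqm$-linear, so one must argue at the level of a fixed $\Fq$-basis of $U$ and track the combination explicitly rather than just saying "$Q$ vanishes on $U$." Everything else — the normal-basis identity $f(\al^{q^j}) = f(\al)^{q^j}$ and its iterate for $f^{\otimes i}$ — is routine given that message polynomials are over $\Fq$.
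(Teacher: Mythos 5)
Your proposal is correct and takes essentially the same route as the paper's own proof: no erasures forces the codeword generator into $U$, the $\Fq$-linearity of Frobenius together with the $\Fq$-linearity of $Q$ and the fact that $\cP$ contains every $q^j$-power of a $U$-basis gives $Q\bigl(\al^{q^j},\fbu(\al)^{q^j},\dots,\fl(\al)^{q^j}\bigr)=0$, and then $\fbu^{\otimes i}(\al^{q^j})=\bigl(\fbu^{\otimes i}(\al)\bigr)^{q^j}$ (coefficients in $\Fq$) converts this into $E(\al^{q^j})=0$. The one cosmetic difference is that you initially invoke the error bound of Theorem~\ref{thm-decoding-radius}, which is not needed here --- only the no-erasure assumption is used --- but you correct this yourself mid-proof.
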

\begin{proof}
Since we assume that no erasure occurs, the transmitted codeword $V$ is contained in the received subspace $U$. In particular, $U$ includes the vector $\bigl(\al,\fbu(\al),\dots,\fl(\al)\bigr)$. Notice that raising to power $q^j$ is a linear operation. Therefore, $\bigl(x^{q^j},y_1^{q^j},\dots,y_L^{q^j}\bigr)$ is a linear combination of some elements of the set of interpolation points $\cP$, for any $(x,y_1,\dots,y_L) \in U$, as $\cP$ contains all the $q^j$-powers of the basis elements of $U$. Furthermore, $Q$ is a linearized polynomial. Therefore,
$$
Q\bigl(x^{q^j},y_1^{q^j},\dots,y_L^{q^j}\bigr)=0
$$
In particular,
\be{alpha-root}
Q \bigl(\al^{q^j},\fbu(\al)^{q^j},\dots,\fl(\al)^{q^j}\bigr) = 0
\ee
Note that for any polynomial $f(X) \in \Fq[X]$, $f(X^{q^j})=f(X)^{q^j}$ for any positive integer $j$. Since all the coefficients of $\fbu^{\otimes i}(X)$ are elements of $\Fq$, \eq{alpha-root} implies that
$$
E(\al^{q^j})=Q \bigl(\al^{q^j},\fbu(\al^{q^j}),\dots,\fl(\al^{q^j})\bigr) = 0
$$
\end{proof}

\begin{corollary}
\label{cor2}
$E(X)$ is the all zero polynomial.
\end{corollary}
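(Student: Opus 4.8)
The plan is to combine the $q$-degree constraints imposed during interpolation with the linear independence of the roots already produced in \Lref{lemma11}. First I would pin down the $q$-degree of $E(X)$. By construction $E(X)=\sum_{i=0}^{L}Q_i\otimes\fbu^{\otimes i}(X)$, where list-decoding algorithm B forces $Q_i$ to have $q$-degree at most $m-(k-1)i-1$, while $\fbu^{\otimes i}(X)$ has $q$-degree exactly $i(k-1)$, since $\fbu$ has $q$-degree at most $k-1$ and composition of linearized polynomials adds $q$-degrees. Hence each summand $Q_i\otimes\fbu^{\otimes i}(X)$ has $q$-degree at most $(m-(k-1)i-1)+i(k-1)=m-1$, so $E(X)$ itself has $q$-degree at most $m-1$.

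Next I would invoke \Lref{lemma11}, which exhibits the $m$ elements $\al,\al^q,\dots,\al^{q^{m-1}}$ as roots of $E(X)$. Since $\al$ was chosen to generate a normal basis of $\Fqm$ over $\Fq$, these $m$ elements are linearly independent over $\Fq$. Thus $E(X)$ is a linearized polynomial of $q$-degree at most $m-1$ that vanishes on $m$ linearly independent points.

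Finally I would appeal to the basic fact about linearized polynomials recalled in Section~\ref{sec:two}: if a linearized polynomial of $q$-degree at most $m-1$ agrees with another such polynomial on $m$ linearly independent points, then the two polynomials coincide. Applying this with the all-zero polynomial in the second role forces $E(X)\equiv 0$, which is the assertion. There is essentially no real obstacle here; the only point demanding care is the $q$-degree bookkeeping, and indeed the interpolation bound ``$q$-degree of $Q_i$ at most $m-(k-1)i-1$'' was calibrated precisely so that every term of $E(X)$ drops below $q$-degree $m$, making this last step go through.
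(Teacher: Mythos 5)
Your proof is correct and follows essentially the same route as the paper's: bound the $q$-degree of each summand $Q_i\otimes\fbu^{\otimes i}(X)$ by $m-1$ via the interpolation degree constraints, then combine with the $m$ linearly independent roots $\al,\al^q,\dots,\al^{q^{m-1}}$ from \Lref{lemma11} to conclude $E(X)\equiv 0$. The only nitpick is that $\fbu^{\otimes i}(X)$ has $q$-degree \emph{at most} $i(k-1)$ rather than exactly $i(k-1)$, but this only strengthens the bound and does not affect the argument.
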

\begin{proof}
Since the $q$-degree of $\fbu(X)$ is at most $k-1$, the $q$-degree of $Q_i \otimes \fbu^{\otimes i}(X)$ is at most 
$$
m-(k-1)i-1+(k-1)i=m-1,
$$
for $i=0,1,\dots,L$. This implies that $q$-degree of $E(X)$ is at most $m-1$. On the other hand, $E(X)$ has at least $m$ linearly independent roots $\al,\al^q,\dots,\al^{q^{m-1}}$ by \Lref{lemma11}. Therefore, $E(X)$ must be the all zero polynomial.
\end{proof}

\begin{theorem}
\label{thm-decoding-radius}
List-decoding algorithm B produces a list of size at most $L$ which includes the transmitted message $\bu$ provided that
$$
t < L-\frac{L(L+1)}{2}\frac{(k-1)}{m}
$$
\end{theorem}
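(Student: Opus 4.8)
The plan is to synthesize the theorem from ingredients already in place: \Lref{lemma10} (existence of the interpolation polynomial), \Lref{lemma11} together with \Cref{cor2} (the message polynomial is an exact root), and \Tref{thm1} (the root-count bound). There are two assertions to verify: that the output list contains $\bu$, and that it has size at most $L$.

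First I would dispose of the interpolation step. This is the only place the hypothesis $t < L-\frac{L(L+1)}{2}\frac{k-1}{m}$ is needed in an essential way: by \Lref{lemma10} it guarantees that the homogeneous system \eq{interpolation-L} admits a nonzero solution $Q(X,Y_1,\dots,Y_L)=Q_0(X)+Q_1(Y_1)+\dots+Q_L(Y_L)$ obeying the prescribed $q$-degree bounds, so Step~2 of List-decoding Algorithm~B returns such a $Q$; in particular at least one of $Q_0,\dots,Q_L$ is nonzero. (We also use the standing assumption that no erasure occurs, so that the transmitted codeword is contained in $U$, which is what \Lref{lemma11} rests on.)

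Next, to see that $\bu$ lies in the list, I would note that $Q(X,f(X),\dots,f^{\otimes L}(X))$ unravels --- using that composition $\otimes$ is the multiplication of $\L_{q^m}[X]$ with $X$ as identity --- to $\sum_{i=0}^{L} Q_i\otimes f^{\otimes i}(X)$, which is precisely the left-hand side of \eq{eq11}. Taking $f=\fbu$ gives the polynomial $E(X)$, and \Cref{cor2} (obtained from the $m$ linearly independent roots $\al,\al^q,\dots,\al^{q^{m-1}}$ supplied by \Lref{lemma11}, against the bound $m-1$ on the $q$-degree of $E$) tells us $E\equiv 0$. Hence $\fbu\in\L_q[X]$ is a root of \eq{factorization-L} of $q$-degree at most $k-1$, so the LRR procedure of Step~3 recovers it and the decoder emits its coefficient vector $\bu$. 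For the size bound I would apply \Tref{thm1} to $\sum_{i=0}^{L}Q_i\otimes f^{\otimes i}(X)=0$: since some $Q_i\neq 0$, this equation has at most $L$ solutions $f$ in $\L_q[X]$, and imposing $q$-degree at most $k-1$ can only shrink the solution set, so Step~3 outputs at most $L$ vectors.

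All of this is bookkeeping once the earlier lemmas are granted. The one point where I expect more than bookkeeping --- and which this argument can only gesture at, deferring to the appendix --- is the guarantee that the LRR algorithm actually enumerates \emph{all} of the (at most $L$) linearized roots of \eq{factorization-L} lying in $\L_q[X]$ of bounded $q$-degree, rather than merely some of them, and does so efficiently. \Tref{thm1} says how many roots exist but not how to extract them, so the correctness of the factorization step is the real content that remains.
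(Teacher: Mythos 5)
Your proof matches the paper's argument essentially line for line: existence of a nonzero $Q$ via \Lref{lemma10}, vanishing of $E(X)$ via \Lref{lemma11} and \Cref{cor2} to place $\fbu$ among the roots of \eq{factorization-L}, and the list-size bound via \Tref{thm1}. Your closing caveat about the LRR algorithm enumerating \emph{all} bounded-degree roots in $\L_q[X]$ is well placed; the paper addresses exactly that in the appendix (\Tref{thm3}, via \Lref{lemma30} and \Lref{lemma29}), so it is a deferred but not a missing step.
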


\begin{proof}
By \Lref{lemma10} the interpolation polynomial $Q \neq 0$ exists. Then by \Cref{cor2}, $E(X)$ is identically zero which means that the message polynomial $\fbu(X)$ is a solution to \eq{factorization-L}. Also, as $Q$ is nonzero, \eq{factorization-L} has at most $L$ solutions by \Tref{thm1}. Therefore, the list size is at most $L$.
\end{proof}

\subsection{Code parameters:}

In this subsection, we discuss the parameters of the code $\cC_q(k,1,m,L)$. The ambient space $W$ has dimension $Lm+1$. Each codeword is a one dimensional subspace of $W$. Therefore, $n=1$ and the packet rate $\Rs$ of the code is given as follows:
$$
\Rs=\frac{\text{log}_{q^m}(\left|\cC_q(k,1,m,L)\right|)}{n}=\frac{k}{m}
$$
The $q$-degree of $Q_L$, the one with smallest degree among $Q_i$'s, has to be non-negative which leads to the following series of inequalities:
\begin{equation*}
\begin{split}
m-(k-1)L-1&\geq 0 \Rightarrow \\
L &\leq \frac{m-1}{k-1} \approx \frac{1}{\Rs} \Rightarrow \\
\Rs &\leq \frac{1}{L}
\end{split}
\end{equation*}

\Cref{cor1} and \Tref{thm-decoding-radius} together show that the bound on the number of errors in \eq{error-bound-L} is a necessary and sufficient condition in order to guarantee correct list-decoding. Therefore, the error decoding radius of list-decoding algorithm B is equal to  $L-\frac{L(L+1)}{2}\frac{(k-1)}{m}$. Since the dimension of the code is equal to $1$, this is also the normalized decoding radius, denoted by $\tau$. We can approximate $\frac{k-1}{m}$ by $\Rs$ to get the following approximation for the normalized decoding radius $\tau$:
$$
\tau \approx L - \frac{L(L+1)}{2}\Rs
$$

\vspace{1ex}

\section{Codes with Arbitrary Dimension}
\vspace{.25ex}
\label{sec:five}

In the foregoing section we proposed a list-decodable construction of subspace codes along with corresponding list-$L$ decoding algorithm for any list size $L$. One weakness of the construction is that codes are one dimensional. Certainly this is not good. One dimensional codes seem somewhat unnatural. Besides, as the normalized decoding radius $\tau$ is integer in this case, we are not able to take advantage of the whole achievable region for $\tau$. In this section, we generalize our construction to any arbitrary dimension. 

In the construction of $\cC_q(k,1,m,L)$, span of $(\al,\fbu(\alpha),\dots$ $, \fl(\al))$ is the codeword corresponding to message polynomial $\fbu$, where $\al$ is generator of a normal basis for $\Fqm$.  Then one simple way to generalize this construction to dimension $2$ is the following. Suppose that $\beta$ is another primitive element of $\Fqm$ which generates a normal basis for $\Fqm$. Then the corresponding codeword is the $\Fq$-linear span of $\bigl(\al,\fbu(\alpha),\dots$ $, \fl(\al)\bigr)$ and $\bigl(\beta,\fbu(\beta),\dots,\fl(\beta)$. When we inject more vectors into the network, we indeed add \emph{redundancy} to the code, and we should get something in return. Adding redundancy means more interpolation points at the receiver which indeed enforces more constraints. In return, we should get more zeros in order to maintain same performance in terms of decoding radius versus rate. However, as we shall see in \Lref{lemma11}, the space root already covers the whole space $\Fqm$. Therefore, this simple generalization does not lead to a good performance. This becomes even worse as dimension increases. The idea is to evaluate the interpolation polynomial in a larger field i.e. an extension of $\Fqm$. We elaborate this idea through this section.  

Fix a finite field $\Fq$ and let $n$ divides $q-1$. Then the equation $x^n-1=0$ has $n$ distinct solutions in $\Fq$. Let $e_1=1,e_2,e_3,\dots,e_n$ be these solutions.  Let $\F=GF(q^{nm})$ and $\gamma$ be a generator of a normal basis for $\F$. Then define
\be{define-alpha}
\alpha_i=\gamma+e_i^{-1}\gamma^{q^m}+e_i^{-2}\gamma^{q^{2m}}+\dots+e_i^{-(n-1)}\gamma^{q^{(n-1)m}}
\ee
for $i=1,2,\dots, n$. 

Next, we discuss the properties of the parameters $\al_i$'s. 
\begin{lemma}
\label{lemma5}
$\al_1 \in \Fqm$ and for $i=2,3,\dots,n$, $\alpha_i^n \in \Fqm$.
\end{lemma}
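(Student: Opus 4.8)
The key observation is that raising to the power $q^m$ is a Frobenius-type automorphism that fixes $\Fqm$ pointwise and permutes the terms in the definition \eq{define-alpha} of each $\alpha_i$. The plan is to compute $\alpha_i^{q^m}$ directly and relate it to $\alpha_i$ itself, so that I can identify which power of $\alpha_i$ is fixed by this automorphism, hence lies in $\Fqm$.

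First I would apply the map $x \mapsto x^{q^m}$ to \eq{define-alpha}. Since $\gamma$ generates a normal basis for $\F = GF(q^{nm})$ over $\Fq$, the conjugates $\gamma, \gamma^{q^m}, \gamma^{q^{2m}}, \dots, \gamma^{q^{(n-1)m}}$ are exactly the conjugates of $\gamma$ under the order-$n$ subgroup of the Galois group generated by $x \mapsto x^{q^m}$, and raising $\gamma^{q^{jm}}$ to the $q^m$ gives $\gamma^{q^{(j+1)m}}$, with indices taken modulo $n$ (because $\gamma^{q^{nm}} = \gamma$). The coefficients $e_i^{-j}$ lie in $\Fq$, so they are fixed by this automorphism. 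Carrying out the substitution and re-indexing, I expect to obtain $\alpha_i^{q^m} = e_i \, \alpha_i$: the cyclic shift of the normal-basis terms combines with the relabeling of the coefficients $e_i^{-j} \mapsto e_i^{-(j-1)} = e_i \cdot e_i^{-j}$, pulling out a global factor of $e_i$.

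Once $\alpha_i^{q^m} = e_i \alpha_i$ is established, the rest is immediate. For $i = 1$ we have $e_1 = 1$, so $\alpha_1^{q^m} = \alpha_1$, which means $\alpha_1$ is fixed by the Frobenius $x \mapsto x^{q^m}$ and therefore $\alpha_1 \in \Fqm$. For general $i$, I raise the relation to the $n$-th power: $(\alpha_i^n)^{q^m} = (\alpha_i^{q^m})^n = (e_i \alpha_i)^n = e_i^n \alpha_i^n = \alpha_i^n$, using that $e_i$ is an $n$-th root of unity in $\Fq$ (so $e_i^n = 1$). Hence $\alpha_i^n$ is fixed by $x \mapsto x^{q^m}$ and thus $\alpha_i^n \in \Fqm$, which is exactly the claim.

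The main obstacle I anticipate is purely bookkeeping: getting the index arithmetic modulo $n$ right in the re-indexing step and confirming that the coefficient shift produces precisely the global factor $e_i$ rather than some other power of $e_i$. It is worth double-checking the edge term ($j = 0$ mapping to $j = n-1$, where $\gamma \mapsto \gamma^{q^{nm}} = \gamma$ and the coefficient $e_i^{0} = 1$ must match $e_i \cdot e_i^{-(n-1)} = e_i^{-(n-2)}$... actually $e_i \cdot e_i^{-(n-1)} = e_i^{2-n} = e_i^{2-n+n} = e_i^{2}$ only if one is careless — the correct check uses $e_i^n = 1$ to see $e_i^{-(n-1)} = e_i$, so the term $\gamma$ picks up coefficient $e_i \cdot 1$ from the shifted sum, consistent with the factored form $e_i\alpha_i$). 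This verification at the wrap-around index is the one place a sign/exponent error could creep in, so I would write that term out explicitly; everything else is a routine application of the normal-basis structure and the fact that $\Fqm$ is the fixed field of $\langle x \mapsto x^{q^m}\rangle$ in $\F$.
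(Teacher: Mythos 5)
Your proof is correct and follows the same approach as the paper: compute $\alpha_i^{q^m}$ by applying the Frobenius $x\mapsto x^{q^m}$ term-by-term, use the cyclic shift of the normal-basis conjugates and $e_i^n=1$ to extract the relation $\alpha_i^{q^m}=e_i\alpha_i$, then conclude by observing that $\alpha_1$ (resp.\ $\alpha_i^n$) is fixed by this automorphism. (Note the paper's proof silently rewrites $\alpha_i=\sum_j e_i^{\,j}\gamma^{q^{mj}}$ rather than $\sum_j e_i^{-j}\gamma^{q^{mj}}$ as in \eq{define-alpha}, so it lands on $\alpha_i^{q^m}=e_i^{-1}\alpha_i$; your sign is the one consistent with the definition as stated, and either version yields the lemma since $e_i^{\pm n}=1$ and $e_1=1$.)
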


\begin{proof}
For $i=1,2,\dots,q-1$, $\al_i^{q^m} = e_i^{-1}\al_i$ by the following series of equalities:
\begin{equation*}
\begin{split}
\alpha_i^{q^m}&=\bigl(\sum^{n-1}_{j=0}e_i^j\gamma^{q^{mj}}\bigr)^{q^m}=\sum^{n-1}_{j=0}(e_i^{q^m})^j\gamma^{q^{m(j+1)}}=\sum^{n-1}_{j=0}e_i^j\gamma^{q^{m(j+1)}}\\
&=e_i^{n-1}\gamma^{q^{nm}}+\sum^{n-1}_{j=1}e_i^{j-1}\gamma^{q^{mj}}=e_i^{-1}\gamma+\sum^{n-1}_{j=1}e_i^{j-1}\gamma^{q^{mj}}\\
&=\sum^{n-1}_{j=0}e_i^{j-1}\gamma^{q^{mj}}=e_i^{-1}\alpha_i
\end{split}
\end{equation*}
Then for $i=1$, $\al_1^{q^m} = \al_1$ and therefore, $\al_1 \in \Fqm$. For $i=2,3,\dots,n$,  $(\alpha_i^{n})^{q^m}=e_i^{-n}\alpha_i^{n}=\alpha_i^n$ which implies that $\alpha_i^n\in \Fqm$.
\end{proof}

\begin{lemma}
\label{lemma6}
The set 
$$
Z=\left\{\alpha_i^{q^j}:1 \leq i \leq n, 0 \leq j \leq m-1 \right\}
$$ 
is a basis for $\F$.
\end{lemma}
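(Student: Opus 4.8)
The plan is to use the explicit description of the $\alpha_i$ in terms of the normal basis generated by $\gamma$, reducing the claim to the nonsingularity of a Vandermonde matrix. First I would record a dimension count: $Z$ has $nm$ elements and $\F = GF(q^{nm})$ has dimension $nm$ over $\Fq$, so it suffices to prove that the elements of $Z$ are linearly independent over $\Fq$. Since $\gamma$ generates a normal basis of $\F$ over $\Fq$, the set $\{\gamma^{q^s} : 0 \le s \le nm-1\}$ is an $\Fq$-basis of $\F$. For each $j$ with $0 \le j \le m-1$, let $W_j$ be the $\Fq$-span of $\{\gamma^{q^{m\ell+j}} : 0 \le \ell \le n-1\}$; each $W_j$ is $n$-dimensional, and because the exponents $m\ell+j$ (with $0\le\ell\le n-1$, $0 \le j \le m-1$) run over $\{0,1,\dots,nm-1\}$ exactly once, we get the direct sum decomposition $\F = W_0 \oplus W_1 \oplus \cdots \oplus W_{m-1}$.

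The key step is then immediate. Because each $e_i$ lies in $\Fq$, it is fixed by the Frobenius $x \mapsto x^q$, so raising \eq{define-alpha} to the power $q^j$ gives
$$
\alpha_i^{q^j} = \sum_{\ell=0}^{n-1} e_i^{-\ell}\,\gamma^{q^{m\ell+j}} \in W_j .
$$
Hence, for a fixed $j$, the coordinate matrix of $(\alpha_i^{q^j})_{1\le i\le n}$ with respect to the basis $(\gamma^{q^{m\ell+j}})_{0\le\ell\le n-1}$ of $W_j$ is the $n\times n$ matrix $(e_i^{-\ell})_{i,\ell}$, a Vandermonde matrix in the nodes $e_1^{-1},\dots,e_n^{-1}$. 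Since $e_1,\dots,e_n$ are the $n$ distinct roots of $x^n-1$ in $\Fq$ — there really are $n$ of them because $n \mid q-1$ forces $x^n-1$ to be separable and to split over $\Fq$ — the inverses $e_i^{-1}$ are pairwise distinct, so this Vandermonde matrix is nonsingular. Therefore $\{\alpha_i^{q^j} : 1 \le i \le n\}$ is an $\Fq$-basis of $W_j$ for every $j$.

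Finally I would assemble the pieces: combining the $m$ bases of the summands $W_j$ with the decomposition $\F = \bigoplus_{j=0}^{m-1} W_j$ shows that $Z = \bigcup_{j=0}^{m-1}\{\alpha_i^{q^j} : 1 \le i \le n\}$ is a basis of $\F$, which is the assertion. Equivalently, one can phrase the whole computation as a single $nm\times nm$ change-of-basis matrix from $Z$ to $\{\gamma^{q^s}\}$ which, after ordering rows and columns according to the residue $s \bmod m$, is block-diagonal with $m$ identical nonsingular Vandermonde blocks.

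There is no serious obstacle here; the only thing requiring care is the exponent bookkeeping — confirming that the contributions of $\alpha_i^{q^j}$ to distinct subspaces $W_j$ do not interfere — together with the elementary observation (already essentially used in \Lref{lemma5}) that $e_i \in \Fq$ lets Frobenius pass through, and that the $n$-th roots of unity are distinct.
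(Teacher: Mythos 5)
Your proposal is correct and follows essentially the same route as the paper: both proofs reduce the claim to the nonsingularity of a Vandermonde matrix built from the $n$-th roots of unity $e_i$, combined with the normal-basis property of $\gamma$ and the $nm$-element count. The paper works directly with the $1\times n$ vectors $A$ and $\Gamma$, inverts the Vandermonde matrix to show spanning, and then invokes the cardinality of $Z$, whereas you organize the same computation via the direct-sum decomposition $\F = W_0 \oplus \cdots \oplus W_{m-1}$ and show each block is handled by an identical Vandermonde block, which is exactly the block-diagonal reformulation you mention at the end.
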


\begin{proof}
Let $A$ and $\Gamma$ be $1 \times n$ vectors as follows:
\begin{equation*}
\begin{split}
A&=(\alpha_1,\alpha_2,\dots,\alpha_n)\\
\Gamma&=\bigl(\gamma,\gamma^{q^m},\dots, \gamma^{q^{(n-1)m}}\bigr)
\end{split}
\end{equation*}
Also, let $E$ be the $n \times n$ matrix whose $(i,j)$ entry is $e_i^{j-1}$. Then by definition, $A=\Gamma E^t$. Note that $E$ is a Vandermonde matrix whose determinant is non-zero, since $e_i$'s are distinct. Thus, we can write 
$$
\Gamma=A(E^{-1})^t
$$ 
It implies that for any $j$, $\gamma^{q^{mj}}$ is a linear combination of $\alpha_i$'s. We can raise this to power $q^r$, for any $0 \leq r \leq m-1$, and say that $\gamma^{q^{mj+r}}$ is a linear combination of $\alpha_i^{q^r}$'s. Thus $\gamma^{q^l}$ is a linear combination of elements of the set $Z$, for $0 \leq l \leq nm-1$. Therefore, elements of $Z$ span the whole space $\F$. But $\left|Z\right|=nm$. Hence $Z$ is a basis for $\F$.
\end{proof}

\subsection{Encoding and Decoding}

The following parameters of the construction are fixed: the finite field $\Fq$ and an extension field $\Fqm$, the number of information symbols $k$, the dimension of code $n$ and the list size $L$. We require that $k \leq nm$ and $n \leq q-1$. 

We let $[s]$ denote the set of positive integers less than or equal to $s$, for any positive integer $s$. 
\\\textbf{Encoding Algorithm C:}
 \\A message vector $\bu =(u_0,u_1,\dots,u_{k-1}) \in \Fq^k$ is the input to the encoder. The corresponding message polynomial is $\fbu (X) = \sum^{k-1}_{i=0} u_i X^{\left[i\right]}$. For $i=1,2,\dots,n$, consider $\al_i$ defined in \eq{define-alpha}. The encoder constructs vectors $v_i \in W$ as follows. For $i=1,2,\dots,n$, 
$$
v_i=(\alpha_i,\fbu(\alpha_i),\fbu^{\otimes 2}(\alpha_i),\dots, \fbu^{\otimes L}(\alpha_i))
$$
The encoder then outputs $n$-dimensional vector space $V$ spanned by $v_1,v_2,\dots,v_n$. 

In this construction, the ambient space $W$ is 
\be{ambient-space}
\left\langle \al_1,\al_2,\dots,\al_n \right\rangle \oplus \underbrace{\F_{q^{nm}}\oplus \dots \oplus \F_{q^{nm}}}_{L \text{ times}}
\ee
which has dimension equal to $n+nmL$. 
\\\textbf{Remark.\,} 
Each element in $W$ is represented by a vector with $L+1$ coordinates such as $(x,y_1,y_2,\dots,y_L)$, where $x$ belongs to the vector space spanned by $\al_1,\al_2,\dots,\al_n$ and $y_i \in \F_{q^{nm}}$, for $i=1,2,\dots,L$.
\hfill\raisebox{-0.5ex}{$\Box$}\vspace{1.0ex}
 \\\textbf{Definition\,4.} The code $\cC_q(k,n,m,L)$ is the collection of all possible codewords $V$ generated by encoding algorithm C.
 \hfill\raisebox{-0.5ex}{$\Box$}\vspace{1.0ex}
 
Suppose that a codeword $V \in \cC_q(k,n,m,L)$ is transmitted through the operator channel and the decoder receives a vector space $U \in \cG(W)$ with dimension $d$. Then we define the parameter $\omega$ as follows:
\be{define-omega}
\omega = \left\lceil \frac{md+1}{L+1} + \frac{1}{2}L(k-1) \right\rceil
\ee
This will guarantee existence of the interpolation polynomial $Q$ in the list-decoding algorithm C. The algorithm is as follows:
 \\\textbf{List-decoding Algorithm C:}
\begin{enumerate}
\item \textit{Computing the interpolation points:} Find a basis for $U$ as follows: 
$$
\left\{(x_i,y_{i,1},y_{i,2},\dots,y_{i,L}): i=1,2,\dots,d\right\}
$$
Then for $h=0,1,2,\dots,m-1$, the set $\cP_h$ is defined as follows:
$$
\cP_h = \left\{(x_i^{q^h},y_{i,1}^{q^h},\dots,y_{i,L}^{q^h}): i \in [d] \right\}
$$
The set of interpolation points $\cP$ is equal to: 
$$
\cP = \cP_1 \cup \cP_2 \cup \dots \cup \cP_h
$$

\item \textit{Interpolation:} Construct a nonzero multivariate linearized polynomial $Q(X,Y_1,Y_2,\dots$ $,Y_L)$ of the form
$$
Q_0(X)+Q_1(Y_1)+Q_2(Y_2)+\dots+Q_L(Y_L)
$$
with each $Q_i$ having $q$-degree at most $\omega-(k-1)i-1$, for $i=0,1,\dots,L$ subject to the constraint that
\be{interpolation-general}
Q(x,y_1,y_2,\dots,y_L)\, =\, 0
\ee
for any $(x,y_1,y_2,\dots,y_L) \in \cP$. 

\item \textit{Factorization:} Find all the roots $f(X) \in \L_q[X]$, with degree at most $k-1$ of the equation: 
\be{factorization-general}
Q\bigl(X,f(X),\dots,f^{\otimes L}(X)\bigr) = 0
\ee
using LRR algorithm. The decoder outputs coefficients  of each root $f(X)$ as a vector of length $k$.
\end{enumerate}

The first step of the list-decoding algorithm C is done by elementary linear algebraic operations. The interpolation step is indeed solving a system of linear of equations. There are several ways for doing that. The most straightforward way is the Gaussian elimination method. However, this method does not take advantage of the certain structure of this system of equations and therefore, it is not efficient. An efficient polynomial-time interpolation algorithm in the ring of linearized polynomials is presented in \cite{XYS} which is basically analogous to Koetter interpolation algorithm in the ring of polynomials. The factorization step can be performed using linearized Roth-Ruckenstein algorithm, called LRR algorithm, which will be explained in details in the appendix. We have modified the Roth-Ruckenstein algorithm \cite{RR} in order to solve equations over the ring of linearized polynomials. For instance, an equation of degree $L$ over $\L_{q^m}[X]$ has the following form: 
$$
Q_0(X)+Q_1(X) \otimes f(X)+ \dots +Q_L(X) \otimes f^{\otimes L}(X) = 0
$$
where $Q_i$'s are linearized polynomials over $\Fqm$. LRR algorithm finds all the roots of this equation in efficient polynomial time.  

\subsection{Correctness of the algorithm}

\begin{lemma}
\label{lemma14}
Existence of a non-zero solution for interpolation polynomial $Q$ that satisfies \eq{interpolation-general} is guaranteed by the choice of $\omega$ in \eq{define-omega}. 
\end{lemma}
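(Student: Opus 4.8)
The plan is to prove \Lref{lemma14} by the same dimension-counting argument used in the proof of \Lref{lemma10}. First I would count the interpolation points. By construction $\cP$ is the union of the $m$ sets $\cP_0,\cP_1,\dots,\cP_{m-1}$, and each $\cP_h$ consists of the $q^h$-th powers of the $d$ chosen basis vectors of $U$, where $d=\dim U$; hence $|\cP|\le md$. Since all coordinates of every point in $\cP$ lie in the field $\F=GF(q^{nm})$ over which the $Q_i$ are defined, each condition $Q(x,y_1,\dots,y_L)=0$ in \eq{interpolation-general} is a single homogeneous $\F$-linear equation in the unknown coefficients of $Q_0,Q_1,\dots,Q_L$. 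Thus \eq{interpolation-general} amounts to a homogeneous linear system over $\F$ with at most $md$ equations.

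Next I would count the unknowns. Each $Q_i$ has $q$-degree at most $\omega-(k-1)i-1$, hence contributes $\omega-(k-1)i$ coefficients, so the total number of unknowns equals
$$
\sum_{i=0}^{L}\bigl(\omega-(k-1)i\bigr)=(L+1)\omega-(k-1)\frac{L(L+1)}{2}.
$$
A homogeneous linear system with strictly more unknowns than equations always has a nonzero solution, so it suffices to show
$$
(L+1)\omega-(k-1)\frac{L(L+1)}{2}\,>\,md .
$$

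Finally I would check this inequality directly from \eq{define-omega}. Since $\omega=\bigl\lceil\frac{md+1}{L+1}+\frac{1}{2}L(k-1)\bigr\rceil\ge\frac{md+1}{L+1}+\frac{1}{2}L(k-1)$, multiplying through by $L+1$ gives $(L+1)\omega\ge md+1+\frac{1}{2}L(L+1)(k-1)$, that is, $(L+1)\omega-(k-1)\frac{L(L+1)}{2}\ge md+1>md$, which is precisely the required strict inequality; hence a nonzero $Q$ satisfying \eq{interpolation-general} exists. There is essentially no obstacle in this argument — it is a routine counting estimate and the exact analogue of \Lref{lemma10} with $d$ in place of $1+t$ and $\omega$ in place of $m$. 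The only point to keep in mind is that the count tacitly assumes the degree bounds $\omega-(k-1)i-1$ are nonnegative for all $i\le L$, so that every $Q_i$ is a genuine linearized polynomial; this is exactly the separate rate/list-size restriction recorded in the code-parameter discussion.
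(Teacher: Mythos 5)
Your proposal is correct and matches the paper's proof essentially verbatim: both count the $\le md$ homogeneous linear constraints over $\F$, count the $(L+1)\omega-(k-1)\frac{L(L+1)}{2}$ unknown coefficients, and then verify from the ceiling definition of $\omega$ in \eq{define-omega} that unknowns strictly exceed equations. Your closing remark about the degree bounds $\omega-(k-1)i-1$ needing to be nonnegative is a useful reminder, but it is the same rate/list-size caveat the paper itself records separately in the code-parameter discussion.
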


\begin{proof}
\eq{interpolation-general} defines a homogeneous system of at most $md$ equations. The number of unknown coefficients is as follows:
$$
\sum^{L}_{i=0} \omega -(k-1)i=(L+1)\omega-(k-1)\frac{L(L+1)}{2}
$$
A non-zero solution for this homogeneous linear system of linear equations is guaranteed if and only if the number of equations is strictly less than the number of variables. i.e.
\begin{equation*}
\begin{split}
md &<  (L+1)\omega-(k-1)\frac{L(L+1)}{2} \Leftrightarrow \\
\omega &\geq \frac{md+1}{L+1} + \frac{1}{2}L(k-1) 
\end{split}
\end{equation*}
This is guaranteed by the choice of $\omega$ in \eq{define-omega}.
\end{proof}

\begin{lemma}
\label{lemma15}
The linear spans of the sets $\cP_h$, defined in the first step of list-decoding algorithm C, are disjoint for $h=0,1,\dots,m-1$. 
\end{lemma}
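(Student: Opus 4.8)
The plan is to identify each linear span $\Span{\cP_h}$ with $\phi_h(U)$, where $\phi_h$ denotes the coordinate-wise $q^{h}$-th power map on $(L{+}1)$-tuples over $\F$. Since $\phi_h$ is additive and fixes $\Fq$, it is an $\Fq$-linear bijection of the coordinate space, so it carries the chosen basis $\{(x_i,y_{i,1},\dots,y_{i,L})\}_{i\in[d]}$ of $U$ onto the set $\cP_h$, which is therefore $\Fq$-linearly independent, and $\Span{\cP_h}=\phi_h(U)$. Hence the statement becomes: the subspaces $\phi_0(U),\phi_1(U),\dots,\phi_{m-1}(U)$ are mutually disjoint, and I would prove this by tracking the first coordinate of each.

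The structural input is \Lref{lemma6}. It gives that $Z=\{\al_i^{q^j}:1\le i\le n,\ 0\le j\le m-1\}$ is a basis of $\F$, and $Z$ is the disjoint union of the blocks $Z_h=\{\al_1^{q^h},\dots,\al_n^{q^h}\}$; consequently the subspaces $S_h:=\Span{Z_h}=\phi_h\bigl(\Span{\al_1,\dots,\al_n}\bigr)$ form a direct sum, $\F=S_0\oplus S_1\oplus\dots\oplus S_{m-1}$, so $S_h\cap S_{h'}=\{0\}$ whenever $h\ne h'$. By the definition of the ambient space $W$, the first coordinate of every vector of $U$ lies in $\Span{\al_1,\dots,\al_n}=S_0$, hence the first coordinate of every vector of $\phi_h(U)=\Span{\cP_h}$ lies in $S_h$. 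Therefore a vector in $\Span{\cP_h}\cap\Span{\cP_{h'}}$ with $h\ne h'$ must have zero first coordinate; in particular the $m$ spans are pairwise distinct, since $\Span{\cP_h}$ contains $\phi_h(v_1)$ for a generator $v_1=(\al_1,\fbu(\al_1),\dots,\fl(\al_1))$ of the transmitted codeword $V\subseteq U$, and $\phi_h(v_1)$ has nonzero first coordinate $\al_1^{q^h}\in S_h$.

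To upgrade distinctness to genuine disjointness I would use $V\subseteq U$ more fully. Since $\al_1,\dots,\al_n$ are $\Fq$-independent (they lie in $Z$) and $\dim V=n$, projection to the first coordinate is injective on $V$, so $U=V\oplus U_0$ with $U_0=U\cap(\{0\}\oplus\F^{L})$; applying $\phi_h$ then reduces any common vector of $\Span{\cP_h}$ and $\Span{\cP_{h'}}$ to a common vector of $\phi_h(U_0)$ and $\phi_{h'}(U_0)$, i.e.\ one supported only on the last $L$ coordinates. The ingredient that makes the first-coordinate computations go through cleanly is the identity $\al_j^{q^m}=e_j^{-1}\al_j$ proved inside \Lref{lemma5}, which yields $\al_j^{q^{sm+r}}=e_j^{-s}\al_j^{q^r}$ and shows that a $q$-power shift moves $\al_j$ genuinely out of $S_0$; carrying out this bookkeeping is routine but demands care. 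The real obstacle is the final step --- ruling out a nonzero vector in $\phi_h(U_0)\cap\phi_{h'}(U_0)$ --- about which the first-coordinate argument is silent; I expect this to require exploiting non-degeneracy of $U_0$ (equivalently, of the error space $E$), and that is where the bulk of the work will lie.
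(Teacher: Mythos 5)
Your first-coordinate argument \emph{is} the paper's entire proof: the paper also observes that the first coordinate of every element of $\cP_h$ lies in $S_h := \Span{\al_1^{q^h},\dots,\al_n^{q^h}}$, invokes \Lref{lemma6} to get that the $S_h$ are in direct sum, and then simply asserts ``therefore the linear spans of the $\cP_h$ are disjoint.'' That last step is exactly the non-sequitur you flag: a vector in $\Span{\cP_h}\cap\Span{\cP_{h'}}$ is forced to have zero first coordinate, and nothing more. In fact Lemma~15 is \emph{false} as stated. The error space $E$ is arbitrary; if it contains a vector of the form $(0,c,0,\dots,0)$ with a nonzero $c\in\Fq$, then that vector is fixed by every coordinatewise $q^h$-power map $\phi_h$, so it lies in every one of the $m$ spans. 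Consequently the ``non-degeneracy of $E$'' you hope to exploit is simply not available as a hypothesis, and the final step you isolated is not merely hard---it cannot be carried out. (A smaller quibble: the decomposition $U=V\oplus U_0$ also presumes $V\subseteq U$, i.e.\ no erasures, which Algorithm~C does not assume.)

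What actually rescues the paper is that the full strength of Lemma~15 is never used. In the proof of \Lref{lemma18}, the roots of $E(X)$ being counted are the field elements $\beta^{q^h}$, where $\beta$ ranges over a basis of the first-coordinate projection of $U'=V\cap U$ (dimension $n-\rho$, the projection being injective on $V$ because the $\al_i$ are independent). Linear independence over $\Fq$ of these $(n-\rho)m$ elements of $\F_{q^{nm}}$ follows directly from \Lref{lemma6}: within each $S_h$ the images $\beta_j^{q^h}$ remain independent since Frobenius powering is an $\Fq$-linear bijection, and across distinct $h$ the $S_h$ sum directly. That first-coordinate statement is precisely what you (and the paper) actually established, and it is all that is needed; Lemma~15 should have been phrased as a claim about the first-coordinate images $\{x_i^{q^h}\}$, not about the full spans $\Span{\cP_h}$.
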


\begin{proof}
For any $i \in [d]$, $x_i$ is an element of the span of $\al_1,\al_2,\dots,\al_n$. Since raising to power $q^h$ is a linear operation, $x_i^{q^h}$ is an element of 
$$
\left\langle \al_1^{q^h},\al_2^{q^h},\dots,\al_n^{q^h} \right\rangle
$$
By \Lref{lemma6}, these are disjoint vector spaces as $h$ varies between $0$ and $m-1$. Therefore, linear spans of $\cP_h$'s are also disjoint as $h$ varies between $0$ and $h-1$.
\end{proof}

We form the following linearized polynomial $E(X)$ wherein $\fbu(X)$ is the message polynomial and $Q(X,Y_1,\dots,Y_L)$ is the interpolation polynomial provided by list-decoding algorithm C. 
$$
E(X)=Q\bigl(X,\fbu(X),\dots, \fl(X)\bigr)=\sum^{L}_{i=0}Q_i \otimes \fbu^{\otimes i}(X)
$$

Suppose that the number of errors in the received subspace $U$ is $t$ and the number of erasures is equal to $\rho$. Thus $d=n-\rho+t$. 

\begin{lemma}
\label{lemma18}
The linearized polynomial $E(X)$ has at least $(n-\rho)m$ linearly independent roots.
\end{lemma}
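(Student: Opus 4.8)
The plan is to mimic the structure of the one-dimensional case (Lemma~11 and Corollary~2), but now tracking the contribution of the erasure-free part of the codeword. First I would use the hypothesis that exactly $\rho$ erasures occur, so that the intersection $V \cap U$ has dimension at least $n-\rho$; pick a basis $v_{i_1},\dots,v_{i_{n-\rho}}$ of an $(n-\rho)$-dimensional subspace of $V\cap U$, where each $v_{i_s}$ is of the form $\bigl(\al_{i_s},\fbu(\al_{i_s}),\dots,\fl(\al_{i_s})\bigr)$ (a linear combination of the defining generators $v_1,\dots,v_n$, which is again an evaluation vector by $\Fq$-linearity of the maps $\alpha\mapsto\fbu^{\otimes i}(\alpha)$). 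Since $V\cap U\subseteq U$, each such vector lies in $U$, hence is an $\Fq$-linear combination of the chosen basis $(x_i,y_{i,1},\dots,y_{i,L})$ of $U$.

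Next I would invoke the argument of Lemma~11: because $\cP$ contains all the $q^h$-th powers of the basis elements of $U$ for $0\le h\le m-1$, and $Q$ is a linearized polynomial, $Q$ vanishes on the $q^h$-power of every element of $U$, for every $h$ in that range. Applying this to the evaluation vectors $v_{i_s}\in V\cap U$ gives
$$
Q\bigl(\al_{i_s}^{q^h},\fbu(\al_{i_s})^{q^h},\dots,\fl(\al_{i_s})^{q^h}\bigr)=0
$$
for all $s=1,\dots,n-\rho$ and all $h=0,1,\dots,m-1$. As in Corollary~2's companion lemma, since the coefficients of each $\fbu^{\otimes i}(X)$ lie in $\Fq$ we have $\fbu^{\otimes i}(\al_{i_s})^{q^h}=\fbu^{\otimes i}(\al_{i_s}^{q^h})$, so this says precisely that $E(\al_{i_s}^{q^h})=0$; that is, each element of the set $\{\al_{i_s}^{q^h}: 1\le s\le n-\rho,\ 0\le h\le m-1\}$ is a root of $E(X)$. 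This set has $(n-\rho)m$ elements.

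The main obstacle is showing these $(n-\rho)m$ roots are $\Fq$-linearly independent. For this I would appeal to Lemma~6 and the disjointness statement in Lemma~15: the spans $\langle \al_1^{q^h},\dots,\al_n^{q^h}\rangle$ are disjoint (pairwise intersecting trivially) as $h$ ranges over $0,\dots,m-1$, since together they span the $nm$-dimensional space $\F$ and each has dimension $n$. So it suffices to check that for each fixed $h$ the elements $\al_{i_1}^{q^h},\dots,\al_{i_{n-\rho}}^{q^h}$ are linearly independent; this follows because $\al_1,\dots,\al_n$ are linearly independent (being a subset of the basis $Z$ of Lemma~6, restricted to $h=0$, and raising to the power $q^h$ is an $\Fq$-linear bijection on $\F$), hence so is any subset, hence so is their image under the $q^h$-power map, and the index set $\{i_1,\dots,i_{n-\rho}\}$ has size $n-\rho$. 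Combining the within-block independence with the between-block disjointness of the spans yields that all $(n-\rho)m$ roots are linearly independent, completing the proof.
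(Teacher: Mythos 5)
Your proof follows the same route as the paper: intersect $V$ with $U$, observe that $V\cap U$ has dimension $n-\rho$ and consists of evaluation vectors, show that $Q$ vanishes on all $q^h$-th powers of elements of $U$ (hence of $V\cap U$) because $\cP$ contains the $q^h$-th powers of a basis of $U$ and $Q$ is linearized, convert this to $E$-roots using the fact that $\fbu$'s coefficients lie in $\Fq$, and invoke the disjointness coming from Lemmas~6/15 to count the $(n-\rho)m$ independent roots. You are, if anything, more explicit than the paper in the final independence step, which the paper handles very tersely.

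One slip deserves correction. The basis of $V\cap U$ consists of evaluation vectors at $n-\rho$ points $\beta_1,\dots,\beta_{n-\rho}$ that are general $\Fq$-linear combinations of $\al_1,\dots,\al_n$ --- they are \emph{not} a subset of $\{\al_1,\dots,\al_n\}$. The erasure operator $\mathcal{H}_k$ projects onto an arbitrary $(n-\rho)$-dimensional subspace of $V$, not onto a coordinate subspace spanned by some of the original generators. So your justification for per-block independence (``$\al_1,\dots,\al_n$ are linearly independent\dots hence so is any subset, \dots and the index set $\{i_1,\dots,i_{n-\rho}\}$ has size $n-\rho$'') does not apply as written; there is no subset of indices to speak of. The correct (and easy) observation is that the first-coordinate projection on $V$ is an injective $\Fq$-linear map onto $\langle\al_1,\dots,\al_n\rangle$, because each remaining coordinate is an $\Fq$-linear function $\fbu^{\otimes i}$ of the first; hence the first coordinates $\beta_1,\dots,\beta_{n-\rho}$ of any basis of $V\cap U$ are automatically $\Fq$-linearly independent. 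Once this repair is made, raising to the $q^h$-th power preserves independence, the per-$h$ blocks land in the pairwise-direct-sum subspaces $\langle\al_1^{q^h},\dots,\al_n^{q^h}\rangle$ of Lemma~6, and the rest of your argument goes through unchanged.
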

\begin{proof}
Let $U'$ be the intersection of the transmitted codeword $V$ and the received subspace $U$. Then $U'$ is a subspace of the received vector space $U$ with dimension $n-\rho$. For any $(x,y_1,\dots,y_L) \in U'$ and $h=0,1,\dots,m-1$, $\bigl(x^{q^h},y_1^{q^h},\dots,y_L^{q^h}\bigr)$ is an element of the linear span of the set $\cP_h$, because $\cP_h$ contains all the $q^h$-powers of the basis elements of $U$ and also raising to power $q^h$ is a linear operation. Furthermore, $Q$ is a linearized polynomial. Hence,
\be{18-1}
Q\bigl(x^{q^h},y_1^{q^h},\dots,y_L^{q^h}\bigr)=0
\ee
On the other hand, $(x,y_1,\dots,y_L)$ is also an element of the transmitted codeword $V$. Therefore, 
$$
(x,y_1,\dots,y_L) = \bigl(\beta,\fbu(\beta),\dots, \fl(\beta)\bigr)
$$
for some $\beta$ in the linear span of $\al_1,\al_2,\dots,\al_n$.  Since coefficients of $\fbu(X)$ are elements of $\Fq$, for any integer $h$
\be{18-2}
\bigl(x^{q^h},y_1^{q^h},\dots,y_L^{q^h}\bigr) = \bigl(\beta^{q^h},\fbu(\beta^{q^h}),\dots, \fl(\beta^{q^h})\bigr)
\ee
Notice that linear spans of the sets $\cP_h$ are disjoint by \Lref{lemma15}. This together with \eq{18-1} and \eq{18-2} imply that there are at least $(n-\rho)m$ linearly independent roots for $E(X)$. 
\end{proof}

\begin{corollary}
\label{cor4}
 If $\omega \leq (n-\rho)m$, then the linearized polynomial $E(X)$ is identically zero.
\end{corollary}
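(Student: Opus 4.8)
The plan is to play the $q$-degree of $E(X)$ off against the number of linearly independent roots it is forced to possess. First I would bound the $q$-degree of $E(X)$. Each interpolation component $Q_i$ was built with $q$-degree at most $\omega-(k-1)i-1$, and since the message polynomial $\fbu(X)$ has $q$-degree at most $k-1$, its $i$-fold self-composition $\fbu^{\otimes i}(X)$ has $q$-degree at most $i(k-1)$. Consequently each term $Q_i\otimes\fbu^{\otimes i}(X)$ has $q$-degree at most
$$
\bigl(\omega-(k-1)i-1\bigr)+i(k-1)=\omega-1 ,
$$
so $E(X)=\sum_{i=0}^{L}Q_i\otimes\fbu^{\otimes i}(X)$, being a sum of linearized polynomials over $\F$, is itself a linearized polynomial over $\F$ of $q$-degree at most $\omega-1$.

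Next I would apply \Lref{lemma18}, which guarantees that $E(X)$ vanishes on at least $(n-\rho)m$ linearly independent points. Under the hypothesis $\omega\le(n-\rho)m$, this says in particular that $E(X)$ has at least $\omega$ linearly independent roots.

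To conclude I would invoke the uniqueness property of linearized polynomials recalled in Section~\ref{sec:two} (from \cite{KK}): two linearized polynomials of $q$-degree at most $\omega-1$ that agree on $\omega$ linearly independent points must be identical. Taking one of them to be the zero polynomial, and combining the degree bound with the root count just established, we obtain $E(X)\equiv 0$. Every step here is a direct appeal to something already proved, so there is no real obstacle; the only point needing care is the degree bookkeeping in the first step — checking that the deliberately chosen bound $\omega-(k-1)i-1$ on $\deg Q_i$ is exactly what forces all $L+1$ summands of $E(X)$ to have $q$-degree at most $\omega-1$ simultaneously.
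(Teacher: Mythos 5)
Your proposal is correct and matches the paper's own argument: bound each term $Q_i\otimes\fbu^{\otimes i}$ by $q$-degree $\omega-1$, invoke Lemma~\ref{lemma18} for the $(n-\rho)m$ linearly independent roots, and conclude from the root-count exceeding the degree that $E\equiv 0$. The only cosmetic difference is that you state the ``uniqueness of linearized polynomials agreeing on enough independent points'' explicitly as the final step, whereas the paper asserts the conclusion directly; the substance is the same.
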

\begin{proof}
The $q$-degree of $\fbu(X)$ is at most $k-1$. Therefore, the $q$-degree of $Q_i \otimes \fbu^{\otimes i}(X)$ is at most 
$$
\omega-(k-1)i-1+(k-1)i=\omega-1,
$$
for $i=0,1,\dots,L$. Thus the $q$-degree of $E(X)$ is at most $\omega-1$. On the other hand, $E(X)$ has at least $(n-\rho)m$ linearly independent roots by \Lref{lemma18}. Therefore, $E(X)$ must be the all zero polynomial.
\end{proof}

\begin{theorem}
\label{thm-general}
The output of list-decoding algorithm C is a list of size at most $L$ which includes the transmitted message $\bu$ provided that
\be{error-bound-general}
L\rho+t \leq nL-\frac{L(L+1)}{2}\frac{(k-1)}{m} -\frac{1}{m}
\ee
\end{theorem}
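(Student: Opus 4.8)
The plan is to assemble \Lref{lemma14}, \Lref{lemma18}, \Cref{cor4}, and \Tref{thm1} into one chain; the only genuine computation is checking that hypothesis \eq{error-bound-general} is precisely the inequality needed to apply \Cref{cor4}.

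First, since $\omega$ is defined by \eq{define-omega}, \Lref{lemma14} guarantees that the interpolation step of list-decoding algorithm C returns a nonzero $Q(X,Y_1,\dots,Y_L)=Q_0(X)+Q_1(Y_1)+\dots+Q_L(Y_L)$ satisfying \eq{interpolation-general} with the prescribed $q$-degree bounds on the $Q_i$. Next --- the heart of the proof --- I would verify that $\omega\le(n-\rho)m$. Writing $d=n-\rho+t$ and using that $(n-\rho)m$ is an integer, it suffices to show
$$
\frac{md+1}{L+1}+\frac{1}{2}L(k-1)\ \le\ (n-\rho)m .
$$
Multiplying by $L+1$, substituting $d=n-\rho+t$, subtracting $(n-\rho)m$ from both sides, and dividing by $m$ turns this into
$$
t+\frac{1}{m}+\frac{L(L+1)(k-1)}{2m}\ \le\ L(n-\rho),
$$
which rearranges exactly to \eq{error-bound-general}, i.e. $L\rho+t\le nL-\tfrac{L(L+1)}{2}\tfrac{k-1}{m}-\tfrac1m$. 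Hence $\omega\le(n-\rho)m$, so \Cref{cor4} gives $E(X)\equiv0$; equivalently the message polynomial $\fbu(X)$ is a root of \eq{factorization-general}, and therefore the coefficient vector $\bu$ appears among the outputs of the factorization step.

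Finally, since $Q\ne0$, at least one $Q_i$ is nonzero, so \Tref{thm1} applies to \eq{factorization-general} and shows it has at most $L$ roots in $\L_q[X]$; the LRR algorithm recovers all of them (each of $q$-degree at most $k-1$), so the output list has size at most $L$ and contains $\bu$. The only delicate point is the inequality manipulation above --- keeping the ceiling in \eq{define-omega} harmless via integrality of $(n-\rho)m$, and tracking the direction of the inequality when dividing by $m$ --- together with the implicit assumption, inherited from the code's parameter constraints, that $\omega$ is large enough for the stated $q$-degree bounds to be nonnegative.
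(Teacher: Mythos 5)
Your proposal is correct and follows essentially the same chain of reasoning as the paper's proof: \Lref{lemma14} for existence of $Q$, an equivalence check between $\omega\le(n-\rho)m$ and \eq{error-bound-general}, \Cref{cor4} to conclude $E(X)\equiv0$, and \Tref{thm1} to bound the list size. Your added remark that integrality of $(n-\rho)m$ lets the ceiling in \eq{define-omega} be dropped is a small but legitimate clarification that the paper leaves implicit.
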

\begin{proof}
The existence of non-zero Interpolation polynomial $Q$ that satisfy \eq{interpolation-general} is guaranteed by \Lref{lemma14}. Then by \Cref{cor4}, $E(X)$ is the all zero polynomial provided that
\be{thm-general-1}
\frac{md+1}{L+1} + \frac{1}{2}L(k-1) \leq (n-\rho)m
\ee
where we have used the expression for $\omega$ from \eq{define-omega}. We plug in $d=n-\rho+t$ into \eq{thm-general-1}. Then observe that \eq{thm-general-1} is indeed equivalent to
$$
L\rho+t \leq nL-\frac{L(L+1)}{2}\frac{(k-1)}{m} -\frac{1}{m}
$$
Thus this condition on the number of errors and erasures implies that $E(X)$ is identically zero. Therefore, the message polynomial $\fbu(X)$ is a solution to \eq{factorization-general}. Also, since $Q$ is nonzero, \eq{factorization-general} has at most $L$ solutions by \Tref{thm1}. Therefore, the list size is at most $L$.
\end{proof}

\subsection{Code parameters}
The ambient space $W$ has dimension $n+nmL$ in construction of the code $\cC_q(k,n,m,L)$. Each codeword is an $n$-dimensional subspace of $W$. Then the packet rate $\Rs$ of the code is given as follows:
$$
\Rs=\frac{\text{log}_{q^m}(\left|\cC_q(k,n,m,L)\right|)}{n}=\frac{k}{nm}
$$
The $q$-degree of each $Q_i$ must be non-negative. Notice that $Q_L$ has the smallest degree among $Q_i$'s. This leads to to the following series of inequalities:
\begin{equation*}
\begin{split}
nm-(k-1)L-1&\geq 0 \Rightarrow \\
L &\leq \frac{nm-1}{k-1} \approx \frac{1}{R} \Rightarrow \\
\Rs &\leq \frac{1}{L}
\end{split}
\end{equation*}

\Tref{thm-general} provides the bound on the number of errors and erasures which guarantees correct list-decoding. This implies that the error decoding radius of list-decoding algorithm C is equal to  
$$
nL-\frac{L(L+1)}{2}\frac{(k-1)}{m} - \frac{1}{m}
$$ 
where the number of erasures is weighted by $L$. Normalizing this by dimension $n$ we get the normalized decoding radius $\tau$:
\begin{equation*}
\begin{split}
\tau &= L-\frac{L(L+1)}{2}\frac{(k-1)}{nm} - \frac{1}{nm}\\
 & \approx L - \frac{1}{2} L(L+1) \Rs
\end{split}
\end{equation*}
which is similar to one dimensional case. It means that increasing the dimension does not affect the decoding radius. The normalized decoding radius of KK construction, given in \eq{tau-KK}, is equal to $1-\Rs$.  Let's call the normalized decoding radius of list-decoding algorithm C with list size $L$ to be $\tau_L$. Notice that $\tau_{\text{KK}}$ is indeed equal to $\tau_1$.  We compare  $\tau_{\text{KK}}$ with $\tau_L$ for various amounts of $L$. This is plotted in Figure\,1.

\includegraphics[width=.47\textwidth]{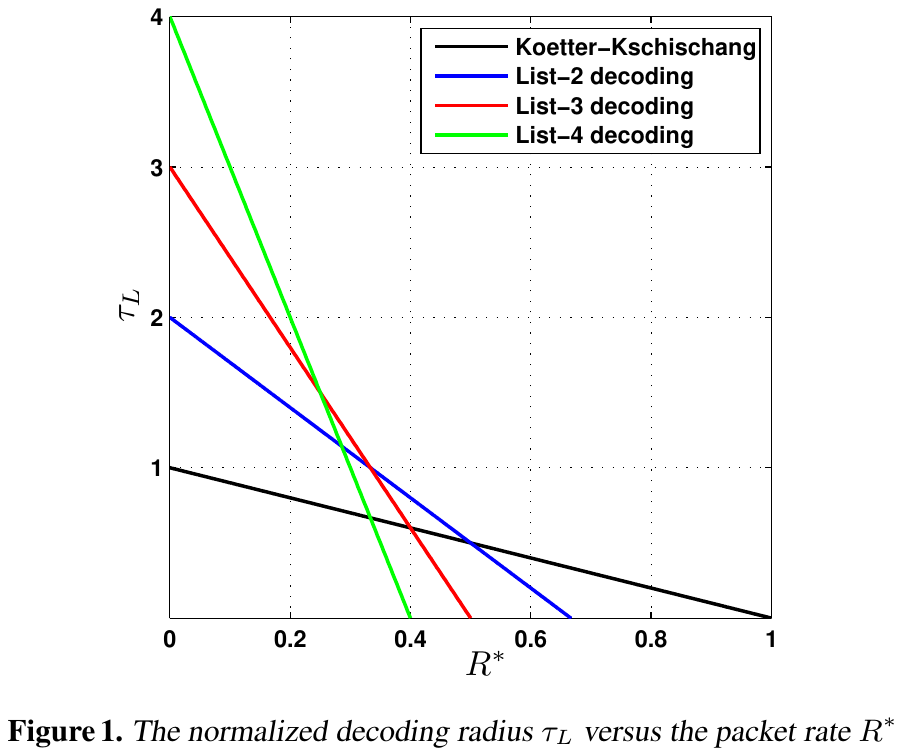}

\vspace{1ex}

\section{Discussion and Conclusions}
\vspace{.25ex}
\label{discussion}
In this paper we have considered the problem of list-decoding of subspace codes proposed for error correction in random linear network coding. To this end, we modified and generalized the original Koetter-Kschischang construction in various ways. In fact, we constructed a new subspace code and proposed a list-decoding algorithm that enables error-correction beyond the unique decoding bound. Interestingly, for a fixed code dimension, we can actually correct any number of errors provided that the list size is sufficiently large and the rate is small enough. In this case, the list size is indeed proportional to the number of errors. 

However, we are able to achieve a better decoding radius than KK construction only at low rates. Then one possible direction for future work is to extend this work in order to enable list-decoding at higher rates as well. We may take advantage of similarities between this work and Sudan list-decoding algorithm of RS codes. When Sudan list-decoding algorithm of Reed-Solomon codes was introduced, there was a similar problem. Later Guruswami and Sudan proposed a new method in that they enforced multiple roots for the interpolation polynomial which resulted in a significant improvement upon Sudan's first result. Therefore, it is natural to look for an analogous technique in the ring of linearized polynomials. However, there is no clear notion of multiple roots for linearized polynomials in the literature. In fact, one has to introduce multiplicity in the ring of linearized polynomials in such a way that list-decoding at higher rates is enabled. 

As mentioned, in order to do list-decoding we modify and generalize the KK construction in many ways. Then the natural question is the following: is there a way to list-decode the KK construction without any modification at the transmitter side? This suggests another path for future work. 

\appendix

We provide the linearized Roth-Ruckenstein algorithm (LRR algorithm) which is used in the  factorization step of all of our list-decoding algorithms. LRR algorithm basically solves equations over the ring of linearized polynomials. Consider a polynomial $Q(X,Y)$, where $Y$ is a variable in the ring $\L_q[X]$, of the form
\be{Q-form-LRR}
Q(X,Y) = Q_0(X) + Q_1(X) \otimes Y + \dots + Q_L(X) \otimes Y ^ {\otimes L}
\ee
where $Q_i$'s are linearized polynomials over a finite extension of $\Fq$. LRR algorithm finds all the roots $Y \in \L_q[X]$ with $q$-degree at most $k-1$, for some $k \in \N$, for which $Q(X,Y)$ is identically zero. 

We say that the polynomial $Q(X,Y)$ is divisible by $X^{q^s}$ if all the $Q_i$'s, for $i=1,2,\dots,L$, are divisible by $X^{q^s}$. In this case, for each $i$, there is a linearized polynomial $Q'_i$ such that $Q'_i(X)^{q^s} = Q_i(X)$. Then we define 
$$
Q_{\shortdownarrow s}(X,Y) =  Q'_0(X) + Q'_1(X) \otimes Y + \dots + Q'_L(X) \otimes Y ^ {\otimes L}
$$
\\\textbf{Linearized Roth-Ruckenstein (LRR) algorithm}
\\LRR $(Q(X,Y), k \in \mathbb{N}, \lambda \in \mathbb{N} \cup \left\{0\right\})$
\\Global variables:

set $A \subseteq \L_q[X]$,

polynomial $g(X)=\sum^{k-1}_{i=0} u_i X^{q^i} \in \L_q[X]$.
\\Call procedure initially with $Q(X,Y) \neq 0, k>0,\lambda=0$.
\\if$(\lambda==0)$

$A \leftarrow \emptyset$;
\\$s \leftarrow$ largest integer such that $Q(X,Y)$ is divisible by $X^{q^s}$ 
\\$H(X,\gamma) \leftarrow \frac{1}{X} Q_{\shortdownarrow s}(X,\gamma X)$;
\\$Z \leftarrow$ set of all roots of $H(0,\gamma)$ in $\Fq$;
\\for each $\gamma \in Z$ do \{

$u_{\lambda}\leftarrow \gamma$;

if $(\lambda < k-1)$

\hspace{5mm}LRR$(Q_{\shortdownarrow s}(X,Y^q+\gamma X), k, \lambda +1)$;

else

\hspace{5mm} if $(Q(X,u_{k-1}X)==0)$

\hspace{10mm} $A\leftarrow A \cup \left\{g(X)\right\}$;
\\\}

\begin{lemma}
\label{lemma31}
Let $Q(X,Y)$ be as defined in \eq{Q-form-LRR}. Let 
$$
f(X) = f_0 X+f_1 X^q + \dots + f_{k-1} X^{q^{k-1}}
$$
and
$$
H(X,\gamma)=\frac{1}{X}Q(X,\gamma X)
$$
Then coefficient of $X$ in $Q(X,f(X))$ is equal to $H(0,f_0)$.
\end{lemma}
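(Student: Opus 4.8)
The plan is to expand both sides explicitly in terms of the coefficients of the $Q_i$'s and check that they agree. Write $Q_i(X)=\sum_{j\geq 0} q_{i,j}X^{[j]}$ for $i=0,1,\dots,L$, so that $q_{i,0}$ is the coefficient of $X$ in $Q_i$. Note first that $Q(X,f(X))=\sum_{i=0}^{L}Q_i(X)\otimes f^{\otimes i}(X)$ is itself a linearized polynomial (a sum of compositions of linearized polynomials), so ``the coefficient of $X$'' in it is unambiguously its $q$-degree-$0$ coefficient.

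First I would compute that $q$-degree-$0$ coefficient. The key observation is that the composition formula \eqref{eq20} gives $c_0=a_0b_0^{[0]}=a_0b_0$ for the $q$-degree-$0$ coefficients; that is, the $q$-degree-$0$ coefficient is multiplicative under $\otimes$. Iterating this, the $q$-degree-$0$ coefficient of $f^{\otimes i}(X)$ is $f_0^i$, and hence that of $Q_i(X)\otimes f^{\otimes i}(X)$ is $q_{i,0}f_0^i$. Summing over $i$, the coefficient of $X$ in $Q(X,f(X))$ equals $\sum_{i=0}^{L}q_{i,0}f_0^i$.

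Next I would evaluate $H(0,\gamma)$. Substituting the linearized polynomial $\gamma X=\gamma X^{[0]}$ for $Y$, the same multiplicativity (together with the fact that $\gamma X$ has only its $q$-degree-$0$ term) shows $(\gamma X)^{\otimes i}=\gamma^i X$. Hence $Q_i(X)\otimes(\gamma X)^{\otimes i}=Q_i(\gamma^i X)=\sum_{j\geq 0}q_{i,j}\gamma^{iq^j}X^{[j]}$, so $Q(X,\gamma X)$ is divisible by $X$ and
$$
H(X,\gamma)\;=\;\frac{1}{X}\,Q(X,\gamma X)\;=\;\sum_{i=0}^{L}\sum_{j\geq 0}q_{i,j}\,\gamma^{iq^j}\,X^{q^j-1}.
$$
Setting $X=0$ annihilates every term with $j\geq 1$ (where $q^j-1\geq 1$) and leaves $H(0,\gamma)=\sum_{i=0}^{L}q_{i,0}\gamma^i$. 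Putting $\gamma=f_0$ then matches the expression found above, which is exactly the claim.

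I do not expect any genuine obstacle here; the whole argument is careful bookkeeping with \eqref{eq20}. The one point to stay attentive to is the direction of composition in $Q_i(X)\otimes f^{\otimes i}(X)$ (the message polynomial is substituted on the inside of $Q_i$) and the Frobenius twist $\gamma^{iq^j}$ that appears in $Q_i(\gamma^i X)$ --- but since only the $j=0$ terms survive at $X=0$, where this exponent collapses to $\gamma^i$, the twists never actually intervene.
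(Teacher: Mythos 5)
Your proof is correct and follows essentially the same route as the paper's: both hinge on the observation that the $q$-degree-$0$ coefficient is multiplicative under $\otimes$, so that the coefficient of $X$ in $Q_i\otimes f^{\otimes i}$ is $q_{i,0}f_0^i$, and then identify the sum $\sum_i q_{i,0}f_0^i$ with $H(0,f_0)$. The paper arrives there slightly more compactly by recognizing $Q_0(X)+Q_1(f_0X)+\dots+Q_L(f_0^LX)$ as $XH(X,f_0)$ and reading off its constant term, whereas you expand both sides in coordinates and match them; this is just a more explicit version of the same bookkeeping.
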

\begin{proof}
Observe that the coefficient of $X$ in $f^{\otimes i}(X)$ is equal to $f_0^i X$. Therefore, coefficient of $X$ in $Q(X,f(X))$ is equal to coefficient of $X$ in 
$$
Q_0(X)+Q_1(f_0 X)+Q_2(f_0^2X)+\dots+Q_{L}(f_0^{L}X)
$$
The later is equal to $XH(X,f_0)$. Note that coefficient of $X$ in $XH(X,f_0)$ is equal to the constant term in $H(X,f_0)$ which is indeed $H(0,f_0)$.
\end{proof}
Notice that the level of recursion can not go beyond $k-1$. In fact, each sequence of recursions along a recursion descent is associated with a unique polynomial 
$$
f(X)=f_0X+f_1X^q+\dots
$$
which stands for the contents of the global polynomial $g(X)$ computed by that sequence. For $i=0,1,\dots,k-1$, let $P_i(X,Y)$, $T_i(X,Y)$ and $H_i(X,\gamma)$ be the values of $Q(X,Y)$, $Q_{\shortdownarrow s}(X,Y)$ and $H(X,\gamma)$, respectively, during recursion level $\lambda=i$. It can be inductively observed that $P_i$ and $T_i$ are nonzero polynomials for $i=0,1,\dots,k-1$. In fact, $P_0=Q$ is assumed to be nonzero. Since $P_i$ is nonzero, $T_i$ is nonzero which implies that $P_{i+1}$ is nonzero. Therefore, the parameter $s$ is always well-defined. 

At each recursion level $i$, $T_i(X,Y)$ is not divisible by $X^q$. Therefore, coefficient of $X$ in $T_i(X,\gamma X)$ is not zero. Then by \Lref{lemma31}, $H(0,\gamma)$ is not the all zero polynomial.
\begin{lemma}
\label{lemma30}
Let $A$ be the set that is computed by the call LRR$(Q,k,0)$. Then every element of $A$ is a root of $Q$.
\end{lemma}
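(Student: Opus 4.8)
The plan is to follow, along a single branch of the recursion, how the polynomial held in the first argument of LRR evolves, and to relate it back to the original $Q$; the conclusion will then fall out of the explicit test performed at the deepest recursion level. Recall that an element $g(X)$ is adjoined to $A$ only when the level reaches $\lambda=k-1$, and only after the test ``$Q(X,u_{k-1}X)=0$'' succeeds for the polynomial $Q$ \emph{currently passed} to LRR on that branch; that current polynomial is not the original input but the outcome of a sequence of substitutions $Y\mapsto Y^q+\gamma X$ and divide-down operations $(\cdot)_{\shortdownarrow s}$. So the whole content of the lemma is to undo these transformations.

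Fix a branch with successive choices $u_0,u_1,\dots,u_{k-1}\in\Fq$. For $i=0,1,\dots,k-1$ let $P_i(X,Y)$ be the polynomial passed to LRR at level $\lambda=i$ on this branch (so $P_0=Q$), let $s_i$ be the associated exponent, let $T_i=(P_i)_{\shortdownarrow s_i}$, and set $g(X)=\sum_{j=0}^{k-1}u_jX^{[j]}$ and $r_i(X)=\sum_{j=i}^{k-1}u_jX^{[j-i]}$, the ``shifted residual''. (All $P_i,T_i$ are nonzero, as already observed, so every $s_i$ is well defined.) I would establish two identities linking consecutive levels and then chain them downward. The first, call it (I), is that $P_i(X,h(X))=\bigl(T_i(X,h(X))\bigr)^{q^{s_i}}$ for every $h\in\L_q[X]$, so in particular $P_i(X,h(X))=0$ if and only if $T_i(X,h(X))=0$. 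The second, (II), is that $P_{i+1}\bigl(X,r_{i+1}(X)\bigr)=T_i\bigl(X,r_i(X)\bigr)$ for $i=0,1,\dots,k-2$.

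For (I), observe that $Q_{\shortdownarrow s}$ is obtained by replacing \emph{each} coefficient polynomial $Q_\ell$ of $Q$ (including $Q_0$, which is divisible by $X^{q^{s}}$ by the maximality defining $s$) with the $Q'_\ell$ satisfying $(Q'_\ell)^{q^{s}}=Q_\ell$; pushing the $q^{s_i}$-th power out of each composition $Q_\ell(X)\otimes h^{\otimes\ell}(X)$ and using that $x\mapsto x^{q^{s_i}}$ is an additive endomorphism of the coefficient field yields (I), and a $q^{s_i}$-th power is zero exactly when its base is. For (II), the recursive call at level $i$ passes $T_i(X,Y^q+u_iX)$ as $P_{i+1}$; substituting $Y=r_{i+1}(X)$ and using $u_j\in\Fq$ (so $r_{i+1}(X)^q=r_{i+1}(X^q)$ and hence $r_i(X)=u_iX+r_{i+1}(X)^q$) gives $P_{i+1}(X,r_{i+1}(X))=T_i\bigl(X,u_iX+r_{i+1}(X)^q\bigr)=T_i(X,r_i(X))$. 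Now suppose $g\in A$. Then $g$ was adjoined at a leaf of this branch, so $P_{k-1}(X,u_{k-1}X)=0$; since $r_{k-1}(X)=u_{k-1}X$, this is $P_{k-1}(X,r_{k-1}(X))=0$. Descending via (II) and (I) for $i=k-2,k-3,\dots,0$ in turn, $T_i(X,r_i(X))=P_{i+1}(X,r_{i+1}(X))=0$ forces $P_i(X,r_i(X))=0$; at $i=0$ this reads $Q(X,g(X))=P_0(X,r_0(X))=0$. Hence every element of $A$ is a root of $Q$.

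I expect the real work to be the bookkeeping underlying (I) and (II): pinning down once and for all the precise meaning of $Q_{\shortdownarrow s}(X,Y)$ — it must divide $Q_0$ as well as $Q_1,\dots,Q_L$, and this is exactly what keeps the algorithm sound — and tracking the shifted residual $r_i$ so that the formal substitution $Y\mapsto Y^q+u_iX$ in the code corresponds precisely to peeling off the coefficient $u_i$. Note that for this direction one never uses that $u_i$ is a root of $H_i(0,\cdot)$; only the final test at the leaf is invoked. By contrast, the companion statement that every root of $Q$ of $q$-degree below $k$ actually appears in $A$ would use \Lref{lemma31} to identify the admissible leading coefficients with the roots of $H_i(0,\cdot)$ collected in $Z$.
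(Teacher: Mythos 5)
Your proof is correct and follows essentially the same route as the paper: your $r_i$ is the paper's $\phi_i$, and your identities (I) and (II) are precisely the two steps the paper chains together in its single displayed equation $P_i(X,\phi_i)=T_i(X,\phi_i)^{q^{s}}=T_i(X,\phi_{i+1}^q+u_iX)^{q^{s}}=P_{i+1}(X,\phi_{i+1})^{q^{s}}=0$, read by backward induction from the leaf test. You simply isolate and name the two links in the chain; the underlying argument is the same.
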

\begin{proof}
Let 
$$
f(X)=u_0 X+u_1 X^q+\dots+u_{k-1}X^{q^{k-1}}
$$
be an element of $A$. For $0 \leq i < k$, define the polynomial $\phi_i(X)$ by
$$
\phi_i (X)=u_i X+ u_{i+1}X^q+\dots+u_{k-1} X^{q^{k-i-1}}
$$
Since $u_i$'s are elements of $\Fq$, $\phi_i = \phi_{i+1}^q+u_iX$. Let $P_i$ and $T_i$ be the values of $Q$ and $T$ during recursion level $\lambda=i$. We do backward induction on $i=k-1,k-2,\dots,0$ to show that $\phi_i$ is a root of $P_i$. The base of induction is $i=k-1$. Note that $\phi_{k-1}=u_{k-1}X$ which is a root of $P_{k-1}$ by the one before the last line of LRR procedure when $\lambda=k-1$. Now, suppose that $\phi_{i+1}$ is a root of $P_{i+1}$. Then we have
\begin{align*}
P_i(X,\phi_i)=T_i(X,\phi_i)^{q^s}&=T_i(X,\phi_{i+1}^q+u_iX)^{q^s}\\
&=P_{i+1}(X,\phi_{i+1})^{q^s}=0
\end{align*}
Therefore, $\phi_i$ is a root of $P_i$ which completes the induction.  In particular, for $i=0$ we see that $f(X) = \phi_0(X)$ is a root of $P_0=Q$. 
\end{proof}
\begin{lemma}
\label{lemma28}
Let 
$$
f(X)=f_0X+f_1X^q+\dots+f_{k-1}X^{q^{k-1}}
$$
be a root of $Q(X,Y)$ in $\L_q[X]$. For $0 \leq i \leq k-1$, define $P_i(X,Y)$ and $T_i(X,Y)$ inductively by $P_0=Q$ and
$$
T_i(X,Y)^{q^{s_i}}=P_i(X,Y)\ \text{and}\ P_{i+1}(X,Y)=T_i(X,Y^q+f_iX),
$$
where $s_i$ is the largest possible integer such that $P_i(X,Y)$ is divisible by $X^{q^{s_i}}$. Also, define
$$
H_i(X,\gamma)=\frac{1}{X}T_i(X,\gamma X)
$$
Then for $0 \leq i \leq k-1$,
\\i)The polynomial $\phi_i$ defined by
$$
\phi_i=f_iX+f_{i+1}X^q+\dots+f_{k-1}X^{k-1-i}
$$
is a root of $P_i(X,Y)$.
\\ii)$H_i(0,f_i)=0$
\end{lemma}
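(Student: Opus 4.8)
The plan is to establish (i) and (ii) together by a single forward induction on $i$, from $i=0$ up to $i=k-1$, carrying along one auxiliary invariant that makes the ``$q^{s_i}$-th root'' operation legitimate. Write $P_i(X,Y)=\sum_{j=0}^{L}P_{i,j}(X)\otimes Y^{\otimes j}$ and recall that, in the divisibility convention fixed earlier, ``$X^{q^{s_i}}$ divides $P_i$'' constrains only $P_{i,1},\dots,P_{i,L}$; the invariant I would also carry at level $i$ is that $P_{i,0}$ is divisible by $X^{q^{s_i}}$ as well. Granting this, the defining equalities $P_{i,j}(X)=T_{i,j}(X)^{q^{s_i}}$ upgrade termwise to $P_{i,j}(W)=\bigl(T_{i,j}(W)\bigr)^{q^{s_i}}$ for any polynomial $W$ (just match exponents and coefficients), so that summing over $j$ and using additivity of the $q^{s_i}$-power map in characteristic $p$ gives the workhorse identity $P_i(X,Z)=\bigl(T_i(X,Z)\bigr)^{q^{s_i}}$ valid for every linearized polynomial $Z$.

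For the base case $i=0$: part (i) is exactly the hypothesis that $f=\phi_0$ is a root of $Q=P_0$, and since $Q(X,f(X))=0$ forces $Q_0=-\sum_{j\ge1}Q_j\otimes f^{\otimes j}$ while each $Q_j$ with $j\ge1$ is divisible by $X^{q^{s_0}}$ by the definition of $s_0$, the polynomial $Q_0$ is divisible by $X^{q^{s_0}}$ too, which is the invariant at level $0$. For the inductive step, assume $\phi_i$ is a root of $P_i$ and the invariant holds at level $i$. The workhorse identity turns $0=P_i(X,\phi_i)$ into $T_i(X,\phi_i)=0$. I would then invoke the elementary identity $\phi_i=\phi_{i+1}^{\,q}+f_iX$, which holds because every coefficient $f_j$ lies in $\Fq$ and is therefore fixed by the $q$-power map; substituting it into $P_{i+1}(X,Y)=T_i\bigl(X,\,Y^q+f_iX\bigr)$ gives $P_{i+1}(X,\phi_{i+1})=T_i(X,\phi_i)=0$, which is part (i) at level $i+1$. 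The invariant then propagates just as in the base case, from $P_{i+1,0}=-\sum_{j\ge1}P_{i+1,j}\otimes\phi_{i+1}^{\otimes j}$ and the definition of $s_{i+1}$. That every $s_i$ is well defined (i.e.\ that all the $P_i$, $T_i$ are nonzero) is already recorded in the discussion just before the lemma.

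Part (ii) then requires no further induction. By part (i) we have $P_i(X,\phi_i)=0$, hence $T_i(X,\phi_i)=0$ by the workhorse identity. Since $T_i(X,Y)$ has exactly the form treated in \Lref{lemma31}, and $\phi_i$ is a linearized polynomial whose coefficient of $X$ is $f_i$, that lemma yields that the coefficient of $X$ in $T_i(X,\phi_i)$ equals $H_i(0,f_i)$, where $H_i(X,\gamma)=\frac1X\,T_i(X,\gamma X)$. As $T_i(X,\phi_i)=0$, this coefficient is $0$, so $H_i(0,f_i)=0$.

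The main obstacle is bookkeeping rather than any new idea: one has to notice that the divisibility convention ignores the $Y^{\otimes0}$-coefficient, which is precisely why the extra invariant on $P_{i,0}$ is needed for the step $P_i(X,Z)=\bigl(T_i(X,Z)\bigr)^{q^{s_i}}$; and one has to check that feeding a linearized polynomial $Z$ --- not merely a scalar --- into that Frobenius identity is allowed, which it is because the $q$-power map is a ring endomorphism of $\Fq[X]$ that commutes with composition of $\Fq$-linearized polynomials.
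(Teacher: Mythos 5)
Your proof is correct and follows the same line as the paper: part (i) by forward induction using the identity $\phi_i = \phi_{i+1}^q + f_iX$ (valid since each $f_j\in\Fq$), and part (ii) by combining $T_i(X,\phi_i)=0$ with \Lref{lemma31}. The extra invariant you carry --- that $P_{i,0}$ is also divisible by $X^{q^{s_i}}$, propagated from the fact that $\phi_i$ is a root of $P_i$ --- is a genuine and welcome clarification, since the paper's stated divisibility convention constrains only $P_{i,1},\dots,P_{i,L}$, and the identity $P_i(X,Z) = \bigl(T_i(X,Z)\bigr)^{q^{s_i}}$ (used implicitly by the paper in both the inductive step and in part (ii)) would not be justified without it.
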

\begin{proof}
We prove part i) by induction on $i$. For $i=0$, $\phi_0=f$ is a root of $P_0=Q$. Now, suppose that $\phi_i$ is a root of $P_i(X,Y)$. Since $\phi_i=\phi_{i+1}^q+f_iX$, $Y=\phi_{i+1}$ is a root of $P_i(X,Y^q+f_iX)$ and, hence, of $T_i(X,Y^q+f_iX)=P_{i+1}(X,Y)$. This completes the induction which proves part i).

Also, note that
$$
P_i(X,\phi_i(X))=T_i(X,\phi_i(X))^{q^{s_i}}=0\ \Rightarrow\ T_i(X,\phi_i(X))=0
$$
By \Lref{lemma31}, the coefficient of $X$ in $T_i(X,\phi_i(X))$ is equal to $H_i(0,f_i)$ which has to be zero. This proves part ii).
\end{proof}
\begin{lemma}
\label{lemma29}
Let $A$ be the set that is computed by the call LRR$(Q,k,0)$. Then every root of $Q$ in $\L_q[X]$ is contained in $A$.
\end{lemma}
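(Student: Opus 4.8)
The plan is to trace, for an arbitrary root $f(X) = f_0 X + f_1 X^q + \cdots + f_{k-1} X^{q^{k-1}} \in \L_q[X]$ of $q$-degree at most $k-1$ — the class of roots LRR is designed to find, as already fixed in the hypothesis of \Lref{lemma28} — a single descent through the recursion tree and show that it terminates by inserting $f$ into $A$. All the analytic content needed has already been packaged in \Lref{lemma28}: with $P_0 = Q$, with $T_i$ the $X^{q^{s_i}}$-downshift of $P_i$ ($s_i$ maximal), and with $P_{i+1}(X,Y) = T_i(X, Y^q + f_i X)$, one has that $\phi_i = f_i X + f_{i+1} X^q + \cdots + f_{k-1} X^{q^{k-1-i}}$ is a root of $P_i$, and that $H_i(0, f_i) = 0$, where $H_i(X,\gamma) = \frac1X T_i(X, \gamma X)$.

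First I would establish, by induction on $i = 0, 1, \dots, k-1$, the invariant that the call $\mathrm{LRR}(Q, k, 0)$ spawns a nested recursive call at recursion level $\lambda = i$ whose polynomial argument is exactly $P_i(X,Y)$ and in which the global array has already been set with $u_0 = f_0, \dots, u_{i-1} = f_{i-1}$. The base case $i = 0$ is the initial call itself, since $P_0 = Q$. For the inductive step, note that at level $i$ the procedure sets $s$ to the largest integer with $X^{q^s}$ dividing its argument $P_i$ — which is precisely $s_i$ — forms $Q_{\shortdownarrow s} = T_i$, and sets $H(X,\gamma) = \frac1X T_i(X,\gamma X) = H_i(X,\gamma)$, so that $Z$ is the set of roots of $H_i(0,\gamma)$ in $\Fq$. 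By \Lref{lemma28}(ii), $H_i(0, f_i) = 0$, and $f_i \in \Fq$ because $f \in \L_q[X]$; hence $f_i \in Z$, so the loop ``for each $\gamma \in Z$'' really does execute the branch $\gamma = f_i$, assigning $u_i \leftarrow f_i$. If $i < k-1$ it then recurses with argument $T_i(X, Y^q + f_i X) = P_{i+1}(X,Y)$ at level $i+1$, which is exactly the invariant one level down.

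It then remains only to close off the deepest level. At $\lambda = k-1$ the invariant furnishes a call with argument $P_{k-1}$ and with $u_0, \dots, u_{k-2}$ equal to $f_0, \dots, f_{k-2}$; the same argument puts $f_{k-1}$ into $Z$, so the branch $\gamma = f_{k-1}$ sets $u_{k-1} \leftarrow f_{k-1}$, and since $\lambda = k-1$ is not $< k-1$ the procedure reaches the else-branch test ``$Q(X, u_{k-1}X) == 0$'', i.e. ``$P_{k-1}(X, f_{k-1}X) = 0$''. But $\phi_{k-1} = f_{k-1}X$ is a root of $P_{k-1}$ by \Lref{lemma28}(i), so the test passes and $g(X) = \sum_{i=0}^{k-1} u_i X^{q^i} = \sum_{i=0}^{k-1} f_i X^{q^i} = f(X)$ is adjoined to $A$. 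As $f$ was an arbitrary root of $Q$ in $\L_q[X]$ of $q$-degree at most $k-1$, this gives the claim; combined with \Lref{lemma30} it shows that $A$ is exactly the set of such roots of $Q$.

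I expect the only real obstacle to be the bookkeeping: verifying that the polynomial the algorithm actually carries into recursion level $i$ coincides with the abstract $P_i$ of \Lref{lemma28} — so that the code's ``largest $s$ with $X^{q^s}\mid Q$'' matches $s_i$ and the substitution $Y \mapsto Y^q + \gamma X$ composes with the downshift exactly as the lemma prescribes — together with the fact that the recursion is well defined at every level, i.e. that each $P_i, T_i$ is nonzero so $s_i$ is finite. This last point is already recorded in the discussion preceding \Lref{lemma30}, and no computation beyond what \Lref{lemma28} supplies is required.
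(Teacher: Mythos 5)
Your proof is correct and follows essentially the same approach as the paper's: an induction on the recursion level showing that a descent with polynomial argument $P_i$ exists, invoking \Lref{lemma28}(ii) to get $f_i \in Z$ at each level, and \Lref{lemma28}(i) to show the terminal test at level $k-1$ passes. Your version is a touch more explicit about the global-array invariant and about the form of the test $P_{k-1}(X,f_{k-1}X)=0$, but the substance is identical.
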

\begin{proof}
Let 
$$
f(X)=f_0X+f_1X^q+\dots+f_{k-1}X^{q^{k-1}}
$$
be a root of $Q(X,Y)$ in $\L_q[X]$. Define $P_i$, $T_i$ and $H_i$ as in \Lref{lemma28}. We prove by induction on $i$ for $i=0,1,\dots,k-1$ that there is a recursion descent in LRR such that recursion level $i$ is called with the parameters $(P_i,k,i)$.

The base of induction is $i=0$ which is obvious. Suppose that it is true for some $i$. Then by \Lref{lemma28}, $H_i(0,f_i)=0$ and therefore, $\gamma=f_i$ is one of the roots. If $i<k-1$, then for $\gamma=f_i$ the recursive call is made with parameters 
$$
(T_i(X,Y^q+f_iX),k,\lambda+1)=(P_{i+1}(X,Y),k,i+1)
$$
If $i=k-1$, then by \Lref{lemma28}, $P_{k-1}(X,f_{k-1})=0$ which means that $f$ is inserted into $A$. 
\end{proof}
\begin{theorem}
\label{thm3}
The LRR algorithm is correct i.e. for any polynomial $Q$ as defined in \eq{Q-form-LRR}, the call LRR$(Q,k,0)$ computes a set $A$ which consists of all roots of $Q$ in $\L_q\left[X\right]$.
\end{theorem}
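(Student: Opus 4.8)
The plan is to show that the set $A$ returned by the call LRR$(Q,k,0)$ coincides with the set $R$ of all $f(X)\in\L_q[X]$ of $q$-degree at most $k-1$ satisfying $Q(X,f(X))=0$, where $Q$ is as in \eq{Q-form-LRR}, and to do this by proving the two inclusions $A\subseteq R$ and $R\subseteq A$ separately. These inclusions are precisely the content of \Lref{lemma30} and \Lref{lemma29}, so once they are in hand the theorem follows by simply invoking them; the real work is in those two lemmas, and below I sketch how each is obtained.

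For the soundness direction $A\subseteq R$ (\Lref{lemma30}), I would track a single recursion descent that inserts some $f(X)=u_0X+u_1X^q+\dots+u_{k-1}X^{q^{k-1}}$ into $A$, and argue by \emph{backward} induction on the recursion level $i=k-1,k-2,\dots,0$ that the truncated polynomial $\phi_i=u_iX+u_{i+1}X^q+\dots+u_{k-1}X^{q^{k-1-i}}$ is a root of the polynomial $P_i$ present at that level. The base case $i=k-1$ is exactly the test $Q(X,u_{k-1}X)=0$ performed on the line before the insertion. For the inductive step I would use the three defining identities of the recursion — $P_i=T_i^{q^{s_i}}$, $P_{i+1}(X,Y)=T_i(X,Y^q+u_iX)$, and $\phi_i=\phi_{i+1}^q+u_iX$ (the last because the $u_i$ lie in $\Fq$) — to compute $P_i(X,\phi_i)=T_i(X,\phi_{i+1}^q+u_iX)^{q^{s_i}}=P_{i+1}(X,\phi_{i+1})^{q^{s_i}}=0$. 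Specializing to $i=0$ yields $Q(X,f(X))=0$, and the bound $q$-$\deg f\le k-1$ is built into the recursion depth.

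For the completeness direction $R\subseteq A$ (\Lref{lemma29}), I would fix a root $f(X)=f_0X+\dots+f_{k-1}X^{q^{k-1}}$ of $Q$ and show by \emph{forward} induction on $i$ that there is a recursion descent in which level $i$ is entered with the triple $(P_i,k,i)$, the polynomials $P_i,T_i$ being those built from $f_0,\dots,f_{i-1}$ as in \Lref{lemma28}. The base $i=0$ is the initial call. For the step I would invoke part (ii) of \Lref{lemma28}, namely $H_i(0,f_i)=0$, to conclude that $\gamma=f_i$ lies in the set $Z$ of $\Fq$-roots of $H_i(0,\gamma)$ computed at level $i$, so the loop makes the recursive call with parameter $Y^q+f_iX$, which is exactly $(P_{i+1},k,i+1)$; at level $i=k-1$, part (i) of \Lref{lemma28} gives $P_{k-1}(X,f_{k-1}X)=0$, so $f$ is inserted into $A$. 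The key computational ingredient underneath all of this is \Lref{lemma31}: the coefficient of $X$ in $Q(X,f(X))$ equals $H(0,f_0)$, which is precisely what makes the coefficients recoverable one at a time.

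The subtle points — and what I would flag as the crux, more than the algebra itself — are the well-definedness and termination issues surrounding the recursion. One must check that each $P_i$, hence each $T_i$, is nonzero so that the integer $s$ is well defined at every level; this follows inductively from $P_0=Q\neq 0$ together with the fact that dividing out the largest power of $X$ and substituting $Y\mapsto Y^q+\gamma X$ preserve nonvanishing. One must also check that $H_i(0,\gamma)$ is not the all-zero polynomial — which is where the maximality of $s$ is used, via \Lref{lemma31} applied to $T_i$ — so that the branching at each level is over finitely many $\gamma\in\Fq$, and that the recursion depth never exceeds $k-1$, guaranteeing termination. Finally one must confirm that the contents of the global polynomial $g(X)$ assembled along a descent is exactly the $f(X)$ used in the two inductions. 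With these bookkeeping points settled, \Lref{lemma30} and \Lref{lemma29} together give $A=R$, which is the assertion of the theorem.
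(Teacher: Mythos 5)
Your proposal is correct and matches the paper's proof: Theorem 3 is established by combining Lemma \ref{lemma30} (every element of $A$ is a root) and Lemma \ref{lemma29} (every root lies in $A$), and your sketches of those two lemmas — backward induction on the recursion level for soundness, forward induction via part (ii) of Lemma \ref{lemma28} for completeness, with Lemma \ref{lemma31} as the computational engine — reproduce the paper's arguments. The well-definedness and nonvanishing remarks you add are the same ones the paper records in the paragraph preceding Lemma \ref{lemma30}.
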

The proof follows from \Lref{lemma30} and \Lref{lemma29}.


\bibliographystyle{IEEEtran}

\vspace*{1ex}

\end{document}